\documentclass[conference]{IEEEtran}
\IEEEoverridecommandlockouts
\usepackage[
    backend=biber,
    maxnames=3,
	maxcitenames=2,
    sorting=none,
    style=numeric-comp,
    doi=false,
    isbn=false,
    url=false,
]{biblatex}
\AtBeginBibliography{\normalsize}
\usepackage{graphicx}
\usepackage{amsmath,amssymb,amsfonts}
\usepackage{tikz}
\usepackage{multirow}
\usepackage{tcolorbox}
\usepackage{pgfplots}
\usepackage{xcolor}
\usepackage{textcomp}
\tcbuselibrary{skins, breakable}
\usetikzlibrary{shapes.geometric, arrows, positioning,patterns}

\usepackage[inline]{enumitem}
\setlist[enumerate]{leftmargin=*}
\setlist[itemize]{leftmargin=*}
\usepackage{nicefrac}

\tikzstyle{startstop} = [rectangle, rounded corners, minimum width=3cm, minimum height=1cm,
    text centered, draw=black, fill=white]
\tikzstyle{process} = [rectangle, minimum width=3cm, minimum height=1cm, text centered,
    draw=black, fill=white]
\tikzstyle{decision} = [diamond, aspect=2, draw=black, fill=white, text centered, inner sep=1pt]
\tikzstyle{data} = [cylinder, shape border rotate=90, draw=black, aspect=0.4,
    minimum height=1cm, minimum width=2cm, text centered, fill=white]
\tikzstyle{arrow} = [thick,->,>=stealth]
\tikzstyle{dashedarrow} = [thick,dashed,->,>=stealth]

\RequirePackage[T1]{fontenc}

\usepackage{common} 
\usepackage{listings-rust}
\usepackage{svg}
\usepackage{wrapfig}

\newcommand{\promptfuzz}{\textsc{PromptFuzz}\xspace}
\newcommand{\randoop}{\textsc{Randoop}\xspace}
\newcommand{\mutest}{$\mu$\textsc{test}\xspace}
\newcommand{\evosuite}{EvoSuite\xspace}
\newcommand{\lisa}{\textsc{Lisa}\xspace}
\newcommand{\lisaapi}{\textsc{Lisa}$_{\text{API}}$\xspace}
\newcommand{\lisainv}{\textsc{Lisa}$_{\text{inv}}$\xspace}
\newcommand{\citywalk}{\textsc{CITYWALK}\xspace}
\newcommand{\ossfuzz}{OSS-Fuzz\xspace}
\newcommand{\ngram}{$n$-gram\xspace}
\newcommand{\nbench}{7\xspace}
\newcommand{\apiseq}{API-sequence\xspace}

\newcommand{\lisabench}{\textsc{Lisa-Bench}\xspace}

\newcommand{\acne}{\textsc{Acne}\xspace}

\newcommand{\gptmini}{\texttt{gpt-5-mini}\xspace}
\newcommand{\gptfull}{\texttt{gpt-5.1}\xspace}

\newcommand{\rev}[1]{#1}

\begin{document}
%
\title{LLM-Based Invariant Testing for Software Functional Bugs}

\newcommand{\internmark}{\textsuperscript{\ddag}}
\newcommand{\equalmark}{\textsuperscript{\S}}


\author{%
\IEEEauthorblockN{Ruogu Yang\internmark\equalmark}
\IEEEauthorblockA{\textit{Northeastern University}\\
yang.r4@northeastern.edu}
\and
\IEEEauthorblockN{Yifeng He\equalmark}
\IEEEauthorblockA{\textit{University of California, Davis}\\
yfhe@ucdavis.edu}
\and
\IEEEauthorblockN{Yundi Xu}
\IEEEauthorblockA{\textit{University of California, Davis}\\
xydxu@ucdavis.edu}
\and
\IEEEauthorblockN{Yuqing Wei}
\IEEEauthorblockA{\textit{Southern University of Science and Technology}\\
12311043@mail.sustech.edu.cn}
\and
\IEEEauthorblockN{Hao Chen}
\IEEEauthorblockA{\textit{The University of Hong Kong}\\
chenho@hku.hk}
\thanks{\internmark Work done while interning at UC Davis. \equalmark Equal contribution.}
\thanks{This is the authors' extended version of the paper accepted at ISSRE 2026.}
}

\maketitle

\begin{abstract}
	Manually writing unit tests to uncover functional bugs in software libraries
	is not only time-consuming but also requires a deep understanding of the intended semantics of the APIs.
	Heuristic-based test generation methods suffer from low usability because they cannot reason about program semantics
	or interpret source code and documentation as humans do.
	Traditional fuzzing techniques like \ossfuzz often rely on crashes to detect bugs, but functional bugs do not always cause crashes.
	To overcome these limitations,
	we present \lisa, a novel \underline{L}LM-based \underline{i}nvariant testing framework for \underline{s}oftw\underline{a}re functional bugs.
	\lisa iteratively generates API sequences and program invariants guided by API $n$-gram feedback,
	achieving higher bug-detection rates and \rev{competitive} code coverage compared with both fuzzing and prior LLM-based test generation approaches\rev{, and reporting each finding as a high-confidence bug candidate for developer confirmation}.
\end{abstract}


\begin{IEEEkeywords}
Large language models, software testing, test generation
\end{IEEEkeywords}

\section{Introduction}

Software defects (bugs) cause security vulnerabilities and unintended behaviors, and testing grows harder as software scales~\cite{fenton2000quantitative,chou2001os,zhu2025bugsbenchmarks}. Among automated techniques, coverage-guided grey-box fuzzing is the most widely adopted~\cite{serebryany2016libfuzzer,fioraldi2020afl_pp,bohme2017directed}: it mutates inputs to explore deep paths for crashes and hangs, and has uncovered vulnerabilities in thousands of real-world projects~\cite{serebryany2017ossfuzz}.

However, fuzzing primarily targets \emph{implementation} bugs that crash the program, which sanitizers can catch~\cite{serebryany2016libfuzzer}. \emph{Functional} bugs, by contrast, stem from incorrect logic and produce wrong results without crashing; they are hard for dynamic testing to detect and have been called ``machine un-auditable''~\cite{wang2024smartinv}, as catching them requires domain knowledge of the API's \emph{intended semantics} that current automated techniques lack.

Unit testing is the primary method developers use to identify functional bugs~\cite{khorikov2020unit}.
Unit testing requires developers to manually specify three elements: inputs, expected outputs, and the testing semantics that sequence the calls; we refer to the tested functions as the library's APIs and their invocation order as an \apiseq. Writing tests by hand is costly and often neglected~\cite{zhao2017ci,beller2015when}, while heuristic-based generators~\cite{pacheco2007randoop,fraser2011evosuite} cannot reason about API source or documentation, generalize poorly to new libraries, and yield low-diversity, low-coverage tests~\cite{shamshiri2018how,panichella2020revisiting}.

Prior automated approaches share common limitations. Coverage-guided fuzzers waste effort exploring shallow error-handling paths. Meanwhile, although several LLM-based unit test generation (LLM-UT) methods have been proposed recently~\cite{rao2023catlm,he2024unitsyn,he2025fuzzaug,zhang2025citywalk}, these approaches frequently produce one-shot tests with trivial oracles such as \texttt{assert(ptr != NULL)}~\cite{he2025fuzzaug,zhang2025citywalk}. End-to-end coding agents suffer from search-space explosion when planning, invoking tools, and generating tests jointly~\cite{xia2024agentless}.

Prior LLM-UT work has evolved from generating only assertions for simple functions~\cite{nie2023TeCo} to producing complete test functions and full test files~\cite{rao2023catlm,he2024unitsyn,he2025fuzzaug}, and typically evaluates them by pass rate and coverage~\cite{jain2025testgeneval}. \textcite{he2025fuzzaug} note that this setup lacks a reliable oracle, the long-standing \emph{oracle problem}~\cite{howden1978theoretical,weyuker1982nontestable,Barr2015oracle}: when a generated test fails, a reader cannot tell whether the LLM's predicted input--output pair is wrong or the API is, so these methods cannot reliably distinguish incorrect oracles from genuine faults. We define \lisa's invariant-based alternative and its scope in \autoref{sec:inv-def}.

To mitigate this oracle ambiguity,
we propose decoupling LLM-UT into two independent components:
\begin{enumerate*}
    \item sequences of API calls representing the testing semantics,
    \item assertions that verify execution results.
\end{enumerate*}
To maximize code coverage in the generated API sequences,
we follow prior work~\cite{pacheco2007randoop,lyu2024promptfuzz} by using feedback to guide the LLM in selecting diverse API combinations and permuting their orders.
Because we generate only API sequences rather than fuzzing drivers, as in \promptfuzz~\cite{lyu2024promptfuzz},
we introduce a novel $n$-gram API combination coverage as a guidance mechanism (\autoref{sec:n-gram}).
To improve the reliability of the generated assertions,
we relax the requirement for LLMs to predict exact input-output pairs
and instead ask them to infer program invariants~\cite{pei2023can,wang2024smartinv}.
We refer to this testing paradigm as invariant testing.
To construct valid unit tests from these components,
we propose chunk-invariant reasoning (\autoref{sec:chunk-inv-gen}), which partitions the finalized API sequence into multiple semantically coherent chunks and instructs the LLM to insert invariants at the end of each chunk as critical program points.
In this setting, invariants serve as \rev{oracles that are weaker \emph{as specifications} yet more robust \emph{as oracles}} than those required by formal verification\rev{ (two distinct axes, elaborated in \autoref{sec:inv-def})}, enabling us to check useful correctness properties without requiring precise input-output specifications.
By decoupling API-sequence exploration from invariant construction, this two-stage design diversifies generated tests and improves their practicality. It achieves \rev{competitive} code coverage and enables automated detection of functional bugs without precise input-output specifications.
\lisa provides an end-to-end automated pipeline for unit test generation. While finding unknown bugs is a significant benefit, the generated test suites are themselves valuable for regression testing. This invariant-based design mitigates, but does not fully solve, the oracle problem\rev{; accordingly, \lisa reports \emph{high-confidence bug candidates} that a developer confirms rather than acting as a fully autonomous bug detector}.

We present \lisa, a novel \underline{L}LM-based \underline{i}nvariant testing framework for \underline{s}oftw\underline{a}re functional bug detection.
\lisa generates unit test functions iteratively, guided by $n$-gram API coverage,
and inserts program invariants between API chunks to detect functional bugs in the target APIs.
We evaluated \lisa on \nbench real-world C/C++ libraries.
Compared with state-of-the-art fuzzing tools (\ossfuzz~\cite{serebryany2017ossfuzz}),
the unit tests generated by \lisa achieve higher average branch coverage at the library level.
Moreover, when evaluated on re-introduced historical functional bugs, \lisa detects \rev{nine} more bugs than the state-of-the-art LLM-based unit testing framework \citywalk~\cite{zhang2025citywalk}.
We make the following contributions:
\begin{itemize}
    \item We present \lisa, a novel LLM-based invariant testing 
                framework that detects functional bugs in software 
                libraries. To the best of our knowledge, \lisa is 
                the first \rev{to recast functional-bug detection as a \emph{decoupled} two-stage problem, feedback-guided API-sequence synthesis followed by documentation-grounded invariant insertion at chunk boundaries, so that the contribution is the decomposition mechanism rather than the integration of existing components}, enabling automated
                detection of functional bugs without precise 
                input-output specifications.
    \item We introduce chunk-invariant reasoning and $n$-gram API coverage feedback to decouple \apiseq exploration from invariant construction, enabling accurate and diverse unit tests.
    \item We present \lisabench, the first benchmark for evaluating functional bug detectors. 
        We empirically validate \lisa on \lisabench, and its generated tests achieve higher average branch coverage than existing fuzzing approaches and detect more functional bugs than LLM-UT methods.

\end{itemize}

\section{Background}

\subsection{Bugs in Software Supply-Chain} 

Bugs in the software supply chain (\ie, upstream libraries)
are generally categorized into two classes:
\begin{itemize}
	\item \emph{Implementation bugs}:
	      These occur when developers make mistakes in the code implementation,
	      leading to incorrect handling of types, memory, or system resources.
	      Implementation bugs typically exhibit universal rather than domain-specific patterns,
	      such as software crashes or hangs.
	      Owing to their universality, such bugs
	      can often be detected by static code analysis tools with predefined rules~\cite{bessey2010few},
	      or by automated dynamic testing techniques targeting behaviors like crashes, such as fuzzing~\cite{serebryany2016libfuzzer} and symbolic execution~\cite{cadar2008klee}.

	\item \emph{Functional bugs}:
	      These arise when developers make logical errors in the program.
	      When triggered, the API does not crash or terminate abnormally;
	      instead, it produces incorrect outputs that deviate from the expected behavior described in the documentation.
	      Detecting functional bugs in software libraries typically requires developers to manually craft
	      expected input-output pairs as test cases~\cite{khorikov2020unit},
	      or to apply formal verification techniques~\cite{leino2010dafny} to prove correctness.
\end{itemize}

Due to their context-specific nature,
testing for functional bugs is difficult to automate.
Manually crafting high-quality test cases requires developers or domain experts
to spend substantial time writing non-feature testing code.
Moreover, there is no direct way to optimize these tests for exploring deep program states,
except by investing additional effort in writing more tests.
Formal verification also requires manual setup of theorem provers,
and such tools are not as easily integrated into the software development lifecycle
(\ie, continuous development and continuous integration)
as fuzzing and unit testing.
A middle ground for detecting functional bugs automatically
is property-based testing~\cite{claessen2000quick_check},
which allows developers to specify the expected properties of APIs
and then generates random
(or even coverage-guided~\cite{lampropoulos2019coverage})
inputs to test these APIs against their specified properties.
However, no fully automated testing approach yet exists
that can detect functional bugs while remaining as easy to integrate
into modern development pipelines as fuzzing tools for implementation bugs.

\subsection{Program Invariants} 

\begin{figure}[th]
	\input{code/inv_in_llvm.tex}
\end{figure}

Program invariants are conditions in code that must hold
for the program to proceed to the next stage of execution~\cite{ernst2001daikon}.
As the name suggests,
these conditions remain invariant with respect to the program's state:
for all internal states and variable configurations, they must be satisfied.
In modern large-scale software systems and libraries,
developers often \emph{assert} such invariants at critical program points,
\ie, locations in the code where certain conditions must hold
for the function to produce correct outputs.

\autoref{lst:llvm_invariants} shows an example implementation from the LLVM~\cite{lattner2004llvm,llvm-project} library.
This example contains two critical program points,
where developers use assertions to ensure the desired properties of the API.
The first critical point occurs after initialization,
where developers use invariants to verify that all variables are successfully initialized
before proceeding to further optimization.
The second occurs after compiler optimization,
where developers ensure that the computed results satisfy necessary correctness properties.
In general, if the appropriate program invariants hold at their corresponding critical program points,
it is less likely that the program contains functional bugs.

\subsection{Unit Testing}

Unit testing is the primary method developers use to assess functional correctness during software development~\cite{khorikov2020unit}.
A unit test provides a measure of \emph{correctness}~\cite{dijkstra1972notes,khorikov2020unit},
verifying that an API produces the expected results for given inputs.
Otherwise, a logical error is present in the implementation.
Because upstream libraries in the software supply chain are not standalone applications deployed in production environments,
where testing often requires mocking,
their unit tests typically follow the classical style: arrange, act, and assert (AAA)~\cite{khorikov2020unit}.
Previous work on constructing unit test generation datasets~\cite{he2024unitsyn} has also adopted this paradigm.

\begin{figure}[t]
	\input{code/unit_test_example.tex}
\end{figure}

\autoref{lst:AAA_unit_test} shows two examples of classical AAA-style unit test functions.
Both test functions begin with the \emph{arrange} stage,
where variables, memories, and objects are initialized.
The tests then proceed to the \emph{act} stage,
where the APIs under test are invoked with specified inputs.
Together, the arrange and act stages provide the \emph{testing semantics}.
The examples in \autoref{lst:AAA_unit_test} demonstrate invocations of a single API for simplicity;
however, in practice, the act stage often involves multiple APIs composed in sequence~\cite{pacheco2007randoop}.
After invoking the APIs, the final \emph{assert} stage verifies whether the results match the expected outputs.
A failed assertion indicates the presence of a logical error in the code.

\section{Methodology}
\subsection{Overview}
\label{sec:lisa1}

\begin{figure*}[t!]
	\centering
	\includegraphics[width=.7\textwidth]{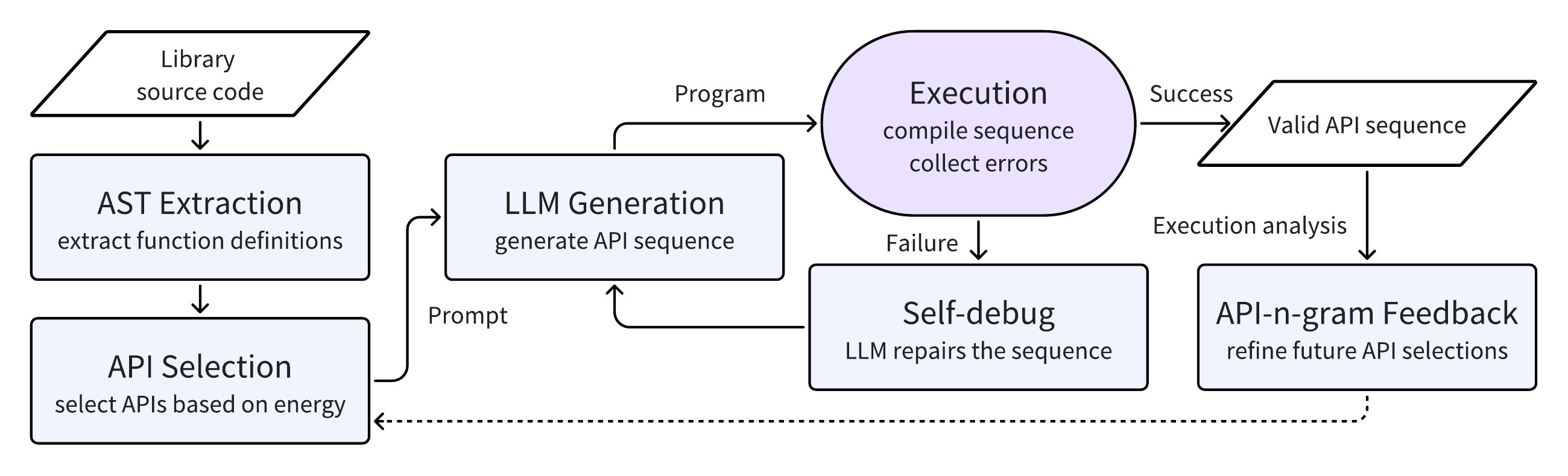}
	\caption{Overview of \lisa's iterative feedback-guided API sequence generation.}
	\label{fig:lisa1}
\end{figure*}

\lisa is a framework for automated unit test generation via LLM-based API reasoning, comprising two modules: \apiseq generation, which produces reasonable API invocation sequences for the library under test, and invariant insertion, which converts a structured API knowledge base into assertions inserted into the \apiseq. As depicted in \autoref{fig:lisa1}, \lisa first extracts library API and type definitions (\autoref{sec:API-extraction}), generates API sequences from selected APIs, project rules, and successful examples (\autoref{sec:API-seq-gen}), repairs erroneous sequences (\autoref{sec:error-detection-repair}), and iteratively refines them using API $n$-gram feedback~\cite{SANTOS2017Stepwise} (\autoref{sec:n-gram}). It then partitions each sequence into semantically meaningful segments~\cite{Ryan2024code}, converts knowledge-base information into invariants inserted per segment, and executes each segment to validate the invariants (repairing on failure), concatenating the validated segments into the final invariant-enriched test.

\subsection{API Information Extraction} \label{sec:API-extraction}

We build this component on top of \promptfuzz~\cite{lyu2024promptfuzz}, a prior LLM-guided API fuzzing framework for C/C++ libraries. \promptfuzz provides a practical front-end for library analysis and prompt construction, including Clang-based AST parsing and basic LLM invocation utilities, making it a suitable foundation for our pipeline. Importantly, \lisa reuses only these front-end components for API extraction and prompt orchestration; we do not use \promptfuzz's fuzzing loop, mutation strategy, or coverage-guided driver generation.

Following \promptfuzz, we use the abstract syntax tree (AST) to extract function and struct definitions from the target library.
\lisa extracts structured API metadata by statically analyzing header files using Clang's AST dump facility,
identifying function declarations, parameter and return types, and definitions of structs, enums, and typedef aliases
by parsing the generated JSON-formatted AST.
All types are then canonicalized into a standardized representation that preserves pointer mutability and array size information. This metadata is serialized into structured schemas that serve as the foundation for API-sequence generation, not fuzzing-driver synthesis.

\subsection{API Sequence Generation} \label{sec:API-seq-gen}
\lisa adopts a self-adaptive few-shot approach~\cite{wan-etal-2023-better} to generate API sequences using a generative large language model (LLM),
leveraging its capability to understand and produce valid code for open-source libraries.
We use \gptmini \rev{for its favorable cost and latency in this exploration-heavy phase, and switch to the stronger \gptfull for \lisainv (\autoref{sec:inv-gen}), where invariant reasoning benefits from a more capable model. Because \lisa's gains stem from its architecture rather than the base model, as our Vanilla-LLM ablation indicates (\autoref{tab:coverage_comparison}), these component gains are largely model-agnostic, and \lisa can be paired with a stronger model when the budget allows}.
To generate syntactically and logically correct code,
we construct a comprehensive prompt (\autoref{fig:prompt_template} in \autoref{sec:prompt-templates}) that restricts the search space using ground-truth \texttt{Library Context}, \texttt{Project Rules} (e.g., closing opened handles), and specific \texttt{API Combinations},
while \texttt{Code Requirements} enforce a straight-line execution policy.
Augmented by \texttt{Successful Examples} and a structured \texttt{Execution Schema (Initialize $\rightarrow$ Cleanup)},
the prompt guides \lisa to generate deterministic, high-quality \apiseq that reflect a complete usage life-cycle
within the library's actual implementation boundaries.

\subsection{Error Detection and Repair} \label{sec:error-detection-repair}
After generating the default number of API sequences,
\lisa attempts to identify and correct errors in these sequences in two steps: detection and repair.

\subsubsection{Detection}
\lisa applies two filters to validate \apiseq{s}. \emph{Static} checks use \texttt{clang} to compile and link the program, rejecting any sequence that fails. \emph{Dynamic} checks run the program in an isolated Docker sandbox with a 30-second timeout and catch segmentation faults, assertion failures, unhandled exceptions, and hangs. Recurring error patterns feed back into project-specific rules for later generation rounds; for \texttt{zlib}, for example, repeated segfaults from calls to \texttt{inflate()} on uninitialized streams caused \lisa to add the rule ``ensure the stream is initialized with \texttt{inflateInit()} before use,'' which the ablation study (\autoref{tab:ablation_lisa1}) shows is one of the recurring rules whose removal degrades execution success on stateful libraries. The \texttt{Successful Examples} slot in the prompt (\autoref{sec:API-seq-gen}) is populated the same way: \lisa retains sequences that compiled and executed in previous rounds and injects them as in-context exemplars for the next round, so the few-shot pool grows automatically.

\subsubsection{Repair}
Rather than discarding erroneous seeds, \lisa performs automatic repair by invoking a large language model (LLM) with an error-guided prompt.
When the generated API sequence fails either during compilation or execution, \lisa extracts the corresponding error message, including the error type, code, and diagnostic details
to construct a verbal feedback~\cite{shinn2023reflexion,wang2024repogenreflex}.
These fields are substituted into our repair template (\autoref{fig:repair} in \autoref{sec:prompt-templates}),
which instructs the LLM to regenerate a fixed version of the same program without altering its overall logic or structure.
The template explicitly restricts the model from redefining the \texttt{main} function, changing function names or parameters, or introducing control-flow constructs such as loops or branches.
This constraint ensures the semantic equivalence between the repaired code and the original while correcting low-level implementation errors (e.g., undeclared variables, mismatched types, or segmentation faults).

If the error type indicates an execution failure without detailed diagnostics, \lisa interprets it as a potential segmentation fault and prompts the LLM to strengthen pointer and memory safety.
\lisa recompiles and re-executes each repaired candidate to verify the fix.
If the repaired sequence is still invalid, \lisa abandons the seed to prevent unnecessary API invocations and computation overhead.
\rev{In practice, this \emph{sequence-repair} loop performs at most one repair attempt per erroneous program: a single LLM-guided fix recovers most compilation and runtime errors, whereas further attempts on a still-failing sequence rarely succeed and only add cost, so the seed is discarded rather than retried.}

\subsection{API N-Gram Feedback} \label{sec:n-gram}

To statistically capture the local co-occurrence patterns of API calls in real-world programs, we adopt an \emph{API-\ngram} model.
Similar to \ngram models in natural language processing, an API-\ngram represents a contiguous subsequence of $N$ API invocations extracted from program traces or source code.
For example, the sequence \texttt{\{open, read, close\}} constitutes an API-3-gram, while \texttt{\{malloc, memcpy, free, printf\}} forms an API-4-gram.
\rev{By modeling which APIs co-occur and in what order, the API-\ngram serves as a lightweight, data-driven dependency model: it captures usage and ordering dependencies among APIs without constructing an explicit API-dependency or interaction graph, as used by dependency-aware approaches such as Hopper~\cite{chen2023hopper} and \citywalk~\cite{zhang2025citywalk}. Integrating such richer dependency models to further improve sequence validity is a promising extension that we leave to future work.}
\rev{We set $N = 3$ for \lisa. \autoref{tab:ngram_sensitivity} reports a sensitivity analysis on \texttt{zlib} under an identical 3-hour budget: $N = 3$ attains the highest line and branch coverage and yields the most valid programs, whereas $N = 2$ under-constrains the sampled API combinations and $N = 4$ makes them too sparse to satisfy, lowering both validity and coverage. We therefore adopt $N = 3$ throughout; the full sweep is also available in our artifact.}

\begin{table}[t]
  \centering
  \footnotesize
  \caption{\rev{Sensitivity to the $n$-gram order $N$ on \texttt{zlib} (3-hour budget). \#Valid: successfully generated programs; coverage measured with \texttt{llvm-cov}.}}
  \label{tab:ngram_sensitivity}
  \begin{tabular}{cccc}
    \toprule
    $N$ & \#Valid & Line Cov. & Branch Cov. \\
    \midrule
    \rev{2} & \rev{512} & \rev{71.48\%} & \rev{59.22\%} \\
    \rev{\textbf{3}} & \rev{\textbf{818}} & \rev{\textbf{73.01\%}} & \rev{\textbf{60.01\%}} \\
    \rev{4} & \rev{705} & \rev{69.89\%} & \rev{57.02\%} \\
    \bottomrule
  \end{tabular}
\end{table}

\subsubsection{API Scheduling} \label{sec:api-scheduling}
Let \(E\) denote the energy of an API \(a\).
We assign a baseline energy to each API \(a_i\) to initialize the feedback loop:
\(E(a_i) = 1\).
When a successful program is generated, we extract all consecutive API 3-grams (triples) from its execution trace
$\mathcal{T} = \{(a_i, a_{i+1}, a_{i+2}) \mid i = 1, 2, \ldots, L-2\}$,
where $L$ is the length of the API call sequence.
For each newly discovered 3-gram $(a_i, a_j, a_k) \in \mathcal{T}$ that has not been observed before, we update the energies of all three constituent APIs:
\begin{equation} \label{eqn:energy-update}
	E(a) \gets E(a) + 1, \text{for each } a \in (a_i, a_j, a_k)
\end{equation}
This additive reward scheme ensures that APIs appearing in successful execution patterns accumulate higher energy over time, with each API starting from the same baseline value.

\subsubsection{Adaptive Condensed Normalization of Energy (\acne)}
We assign energy to successive $n$-grams in the API call sequence. However, assigning sampling probabilities proportional to raw energy skews the distribution heavily: as sampling progresses, a few high-energy $n$-grams dominate the probabilities and cause premature convergence, leaving other API functions under-explored. To encourage continued exploration of low-energy, potentially under-explored API functions, we propose \acne, which dynamically adjusts energy values based on their distribution to balance exploration and exploitation.

\paragraph{Normalization}
Recall from \autoref{eqn:energy-update} that our energy assignment is additive: each time a successful $n$-gram is observed, the energies of its constituent APIs are incremented by 1. The first step of \acne normalizes the energy values into a probability-friendly range. Let $E_{\min}$ and $E_{\max}$ denote the minimum and maximum energy values among all API functions. We normalize the energy of a given API function $a$ with min-max normalization, and ensure every API function keeps a non-zero selection probability by adding a small constant $\varepsilon > 0$:
\begin{equation} \label{eqn:energy-normalization}
	\hat{E}(a) = \varepsilon + (1 - \varepsilon)\bar{E}(a), \quad \text{where }
	\bar{E}(a) = \frac{E(a) - E_{\min}}{E_{\max} - E_{\min}}.
\end{equation}
We use $\varepsilon = 0.01$ in our implementation, so $\hat{E}(a) \in [\varepsilon, 1]$. At the first iteration every API shares the same energy ($E_{\min} = E_{\max}$); in this degenerate case we skip normalization and sample APIs uniformly.

\begin{definition}[Condensation transformation] \label{def:condense}
	Let $I = [0, 1]$ denote the interval containing the (normalized) energies of all API functions. A function $f: I \to [0, \infty)$ is a \emph{condense transformation} if it satisfies:
	\begin{enumerate}
		\item Strictly monotone increasing (order-preserving): $\forall x, y \in I,\ x < y \implies f(x) < f(y)$.
		\item Strictly concave: for $\lambda \in (0, 1)$ and $\forall x, y \in I$ with $x \neq y$, $f(\lambda x + (1 - \lambda) y) > \lambda f(x) + (1 - \lambda) f(y)$.
		\item Endpoint preserving: $f(0) = 0$.
	\end{enumerate}
\end{definition}

\begin{proposition}[Range compression] \label{prop:range-compression}
	Any condense transformation $f$ compresses the range of input values: for any $x, y \in I$ with $0 < x < y \leq 1$, $\;1 < \nicefrac{f(y)}{f(x)} < \nicefrac{y}{x}$.
\end{proposition}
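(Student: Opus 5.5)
The plan is to treat the two inequalities separately. The left inequality will come from strict monotonicity, and the right one from strict concavity combined with $f(0)=0$, using the fact that $f(t)/t$ is strictly decreasing on $(0,1]$. Throughout, fix $x,y\in I$ with $0<x<y\le 1$.

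\textbf{Lower bound.} Since $0<x$, strict monotonicity (property 1 of \autoref{def:condense}) gives $f(x)>f(0)=0$ by endpoint preservation. Hence $f(x)>0$ and the ratio $f(y)/f(x)$ is well defined. Because $x<y$, monotonicity gives $f(y)>f(x)$. Dividing by the positive number $f(x)$ yields $f(y)/f(x)>1$.

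\textbf{Upper bound.} Write $x$ as a strict convex combination of $0$ and $y$: $x=\lambda y+(1-\lambda)\cdot 0$ with $\lambda=x/y\in(0,1)$. Since $y\neq 0$, strict concavity (property 2) applies to the pair $y,0$ and gives
\[
f(x) > \lambda f(y) + (1-\lambda) f(0) = \tfrac{x}{y} f(y).
\]
The last equality uses $f(0)=0$. Rearranging with $x>0$ and $f(x)>0$ gives $f(y)/f(x)<y/x$. Equivalently, $f(t)/t$ is strictly decreasing on $(0,1]$.

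\textbf{Main obstacle.} The only delicate point is ensuring every step is legitimate. Concavity must be applied with the endpoint $0$ actually lying in the domain $I=[0,1]$, which it does. The hypothesis $x\neq y$ in the concavity definition is met by the pair $(y,0)$ because $y>0$. We also need $f(x)>0$ to divide safely, which was established in the lower-bound step. With these checks the proof is a direct two-line argument, and no earlier result is needed beyond \autoref{def:condense}.
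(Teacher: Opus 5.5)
Your proof is correct and follows essentially the same route as the paper's: you write $x = \lambda y + (1-\lambda)\cdot 0$ with $\lambda = x/y$, apply strict concavity together with $f(0)=0$ to get $\lambda f(y) < f(x)$, and use strict monotonicity to get $0 < f(x) < f(y)$. Your ordering, which establishes $f(x) > 0$ before dividing by it, is slightly more careful than the paper's, which rearranges to $f(y)/f(x) < y/x$ before noting that $f(x)$ is positive.
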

\noindent We prove \autoref{prop:range-compression} in \autoref{sec:proofs-of-condense-transformation}.

\paragraph{Condensation}
To mitigate the skewness of the energy distribution, we apply a condense transformation to the normalized energy values. Its properties ensure that
\begin{enumerate*}
	\item API functions with higher energy \emph{always} have a higher probability of being selected by \lisa, and
	\item the marginal effect of increasing energy decreases as energy grows.
\end{enumerate*}
A valid condense transformation always reduces the relative differences between energy values (\autoref{prop:range-compression}). In \lisa, we use \emph{power condensation}: for a given API function $a$, with a parameter $\alpha$ controlling the degree of condensation,
\begin{equation} \label{eqn:power-condense}
	C(a) = \hat{E}(a)^{\alpha}, \quad \alpha \in (0, 1).
\end{equation}
We show that power condensation is a valid condense transformation in \autoref{sec:proofs-of-condense-transformation}.

\paragraph{Distribution-aware adaptive condensation}
The parameter $\alpha$ controls the degree of condensation: a small $\alpha$ largely boosts the probabilities of low-energy API functions (encouraging exploration), whereas $\alpha$ close to $1$ makes the effect mild (favoring exploitation). A fixed $\alpha$ is suboptimal because different stages of \lisa require different exploration--exploitation trade-offs: early on we want to identify likely-useful API functions quickly (exploitation), and once such a set is identified and sufficiently tested we want to explore other potentially useful API functions (exploration). We quantify the skewness of the energy distribution at iteration $t$ with the coefficient of variation $s_t = \nicefrac{\sigma_t}{\mu_t}$, the ratio of the standard deviation $\sigma_t$ to the mean $\mu_t$ of the normalized energies at iteration $t$, and adaptively set
\begin{equation} \label{eqn:adaptive-condensation}
	\alpha_t = \alpha_{\min} + (1 - \alpha_{\min})\, e^{-s_t}.
\end{equation}
Thus, when the distribution is highly skewed (large $s_t$), $\alpha_t$ is automatically lowered to encourage exploration. Here $\alpha_{\min}$ controls the maximum degree of condensation; we set $\alpha_{\min} = 0.5$ (derived in \autoref{sec:choice-of-alpha-min}).

\paragraph{Sampling new API functions}
Finally, we use the condensed energy values to sample new API functions. At iteration $t$, the probability of selecting an API function $a$ is
\begin{equation} \label{eqn:sampling-prob}
	P_t(a) = \frac{C_t(a)}{\sum_{i=1}^{N} C_t(a_i)}, \quad \text{where }
	C_t(a) = \hat{E}_t(a)^{\alpha_t}.
\end{equation}
We use these probabilities to sample a subset of APIs as the \texttt{API Combinations} for \apiseq generation, thereby guiding the LLM's attention.

\subsection{\rev{The Invariants \lisa Generates}} \label{sec:inv-def}

\rev{An invariant in \lisa is a Boolean predicate $\varphi$ over program state that must hold every time execution reaches a designated program point~\cite{ernst2001daikon,pei2023can,wang2024smartinv}. Concretely, for a generated \apiseq partitioned into chunks $C_1\Vert\cdots\Vert C_N$ (\autoref{sec:chunk-inv-gen}), each invariant is a pair $\langle\varphi,\ell\rangle$ in which $\ell$ is the end of some chunk $C_i$ and $\varphi$ ranges over the variables, return values, and reachable struct fields that are live at $\ell$. \lisa compiles each such $\varphi$ into an executable \texttt{assert($\varphi$)}; the assertion is \emph{satisfied} on a given build when it never fires and \emph{violated} otherwise. An invariant is therefore a \emph{partial} specification: it constrains a property the program must preserve, such as a size relation, a state flag, or a return-code contract, without pinning down the exact input--output mapping. That partiality is what makes invariants usable as oracles for functional bugs, sidestepping the input--output ambiguity long recognized as the obstacle to testing \emph{non-testable} programs~\cite{weyuker1982nontestable,Barr2015oracle}.}

\rev{The predicates \lisa emits fall into four kinds of increasing semantic depth. \emph{Structural} invariants assert that a handle or state object is well-formed after a constructor-like call. \emph{Value-range} invariants keep a field within its documented domain. \emph{Relational and conservation} invariants link live values across a call: the bytes a call consumes plus the bytes that remain equal the input size, for example. \emph{Documentation-derived semantic contracts} encode behavioral intent that exact input--output oracles cannot easily express, such as the numeric identity of a checksum API or the guarantee that a destructor resets a state object. The first two kinds are largely structural; the last two carry the intent that lets \lisa detect silent logic errors. \lisa produces these predicates by prompting the LLM with the chunk code and the relevant API documentation from the knowledge base (\autoref{sec:api-contract-doc}), seeded by documentation-mined contracts and Daikon candidates; the algorithm and its verification-and-repair loop are the subject of \autoref{sec:chunk-inv-gen}. A documented \emph{read-only} guarantee, for instance, becomes \texttt{assert(memcmp(buf, saved, n) == 0)} on the relevant input buffer.}

\rev{A \lisa invariant is \emph{semantically valid} when it does not contradict the documented contract of the APIs in scope and holds on the reference build of the library, so that a later violation signals a behavioral deviation rather than an over-strong assertion. \lisa enforces the first condition during knowledge-base construction (\autoref{sec:api-contract-doc}) and the second automatically, by executing each candidate against the reference build before retaining it. This is a deliberately weaker guarantee than soundness in formal verification: \lisa does not prove $\varphi$ over all inputs, only that $\varphi$ is consistent with the documented contract and with observed reference behavior. In that sense \lisa's invariants are weaker as specifications, since they are partial and unproven, yet more robust as oracles, since asserting a stable property is less brittle than predicting an exact output~\cite{weyuker1982nontestable,Barr2015oracle,he2025fuzzaug}. Because \lisa does not prove its invariants, a violation is only a \emph{high-confidence bug candidate}: an assertion that survives the verification-and-repair loop yet still fails on the target build, and a developer confirms every candidate before we count it as a detected bug. We revisit the residual threats this raises in \autoref{sec:threats}.}

\subsection{Invariant Generation} \label{sec:inv-gen}

\begin{figure}[t]
	\centering
	\includegraphics[width=\columnwidth]{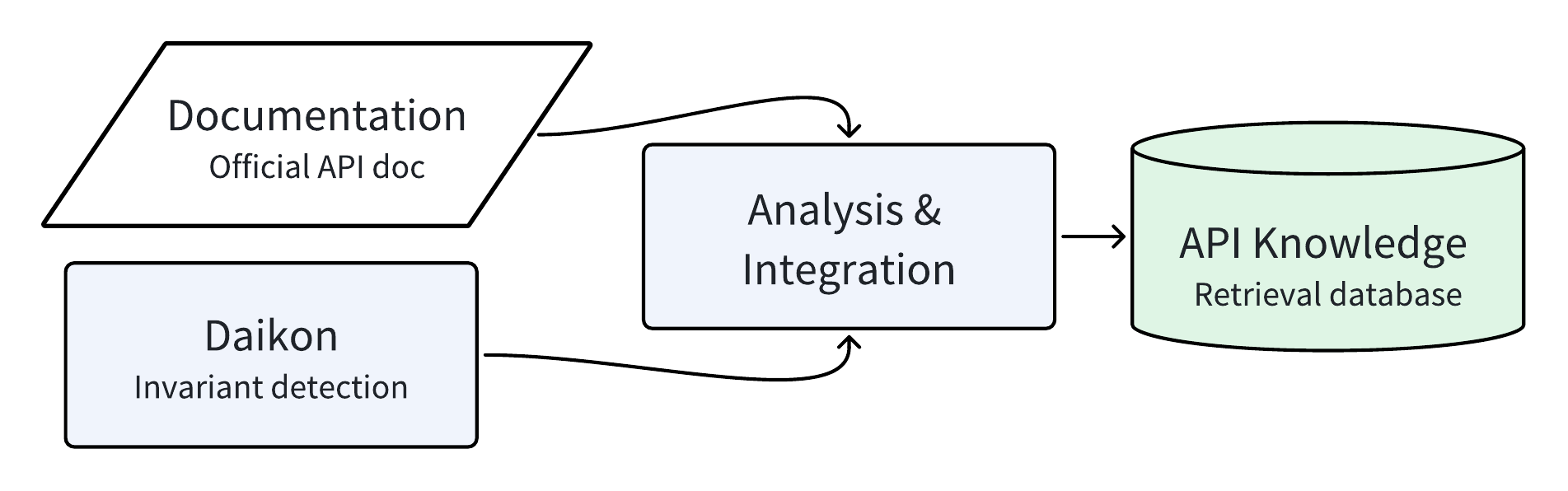}
	\caption{API knowledge database preparation.}
	\label{fig:api-d}
\end{figure}

Existing studies demonstrate that LLM-based invariant generation without structured guidance struggles to produce effective and reliable results~\cite{pei2023can}.
To overcome these limitations, \lisa first constructs an API contract documentation knowledge base to guide invariant generation, as shown in \autoref{fig:api-d} and detailed in \autoref{sec:api-contract-doc}.

As shown in \autoref{fig:invariant-pipeline}, \lisa adopts a segment-based invariant generation strategy to incrementally augment API-level programs with assertions.
The target program is first split into multiple segments.
For the first segment, the LLM takes the raw code as input and outputs the code augmented with inferred assertions.
For each subsequent segment, the input consists of the previously processed code and assertions, the current code segment, and relevant API documentation.
The model then generates assertions for the current segment by leveraging both the program context and external knowledge.
All processed segments are concatenated to produce the final unit tests enriched with invariants.

\begin{figure*}[t!]
	\centering
	\includegraphics[width=.92\textwidth]{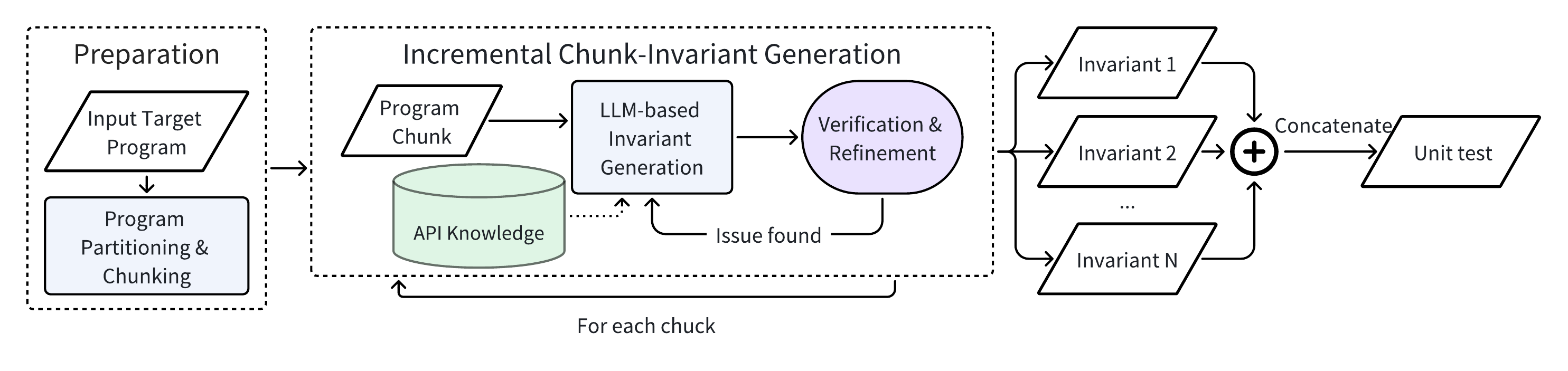}
	\caption{Overview of \lisa's incremental invariant generation pipeline.}
	\label{fig:invariant-pipeline}
\end{figure*}

Notably, LLM infers likely invariants based on the constructed API contract documentation.
When an execution failure occurs, \lisa starts a repair loop and retries up to five times to eliminate transient test bugs, such as mistakes in the generated test code or invariants.
If the failure persists after all retries, \lisa marks it as a high-confidence bug candidate, since repeated failures are unlikely to be caused by random or temporary test errors.
Nevertheless, because semantic reasoning is inherently complex, some persistent failures may still result from subtle test-side inaccuracies rather than actual implementation defects. 
These candidates are forwarded to human reviewers for final verification.

\subsubsection{Building API Knowledge Database}  \label{sec:api-contract-doc}
We build the API contract knowledge base through a hybrid approach:
\begin{enumerate*} 
    \item \textbf{Automated Extraction:} Daikon~\cite{ernst2000daikon} mines \emph{likely invariants} from executable API sequences, and an LLM extracts intended invariants from official API documentation~\cite{Xie2022DocTer} (e.g., ``read-only'' or ``no internal state modified'').
    \item \textbf{Manual Filtering:} We keep a Daikon candidate only if it does not contradict the documentation and does not cause assertion failures when added to the program.
    \item \textbf{Lightweight Manual Curation:} We supplement explicitly documented invariants missed by Daikon. This curation can also be performed incrementally on demand, rather than exhaustively documenting all APIs upfront. \rev{To quantify this manual cost, two annotators with 3+ years of C/C++ experience, after aligning on the annotation protocol, independently curated the knowledge-base entries for 20 APIs, resolving the few disagreements on which invariants to retain by discussion. They spent \num{5687} and \num{6231} seconds in total, i.e., \num{4.7} and \num{5.2} minutes per API respectively (overall mean \num{5.0} minutes per API), indicating that on-demand curation costs only a few minutes per API.}
\end{enumerate*}
We detail how Daikon-mined and documentation-derived invariants are reconciled in \autoref{sec:api-contract-doc-details}.

\subsubsection{Chunk-level Invariant Generation} \label{sec:chunk-inv-gen}

To keep reasoning local and stable, we partition each generated program into smaller chunks~\cite{liu-etal-2024-lost}: during API-sequence construction the LLM emits inline step markers (\texttt{// STEP1}, \texttt{// STEP2}, \dots) that we use to split the program by regular expressions. We then process the $N$ chunks $C_1,\dots,C_N$ left-to-right. For chunk $C_i$, the prompt contains the already-accepted code and assertions from $C_1\dots C_{i-1}$, the raw code of $C_i$, and the documentation for the APIs it uses; the LLM returns $C_i$ augmented with candidate assertions. This preserves prior state while keeping the context bounded.

\paragraph{Verification-and-repair loop}
Each set of candidate assertions for $C_i$ is immediately checked: we compile and execute $C_1 \Vert \cdots \Vert C_i$ and accept the assertions if no failure occurs. When failures arise, we enter a bounded repair loop \rev{(the \emph{invariant-repair} loop, up to $K{=}5$ attempts by default). In contrast to the single-attempt sequence-repair loop (\autoref{sec:error-detection-repair}), a failing assertion is often salvageable by weakening or correcting an over-strong invariant, so a few extra attempts recover many otherwise-discarded tests; we cap at $K{=}5$ because gains plateau beyond that}.
At each attempt, the LLM receives the failing assertion, triggering input (if available), and error trace, and is asked to weaken, repair, or remove the assertion. If the failure persists after $K$ attempts, we flag the seed as a potential unknown bug for human inspection, discard the assertion, and proceed. After all chunks are processed, the accumulated program forms the final invariant-enriched test. Compared with one-shot inference over the whole program, this incremental scheme shortens each prompt, localizes mistakes, and enables early detection and targeted repair of over-strong assertions via immediate execution feedback.

\section{Experimental Setup}
\subsection{Research Questions}
To evaluate \lisa, we pose the following three research questions. For convenience, we refer to the API sequence generation stage as \lisaapi and the invariant generation stage as \lisainv.
\begin{enumerate}[label=\textbf{RQ\arabic*}]
    \item \emph{How does \lisa perform in C/C++ unit test generation, and what advantages does it offer compared to existing approaches?}
          We evaluate \lisa's effectiveness in generating valid, meaningful C/C++ unit tests and compare it against state-of-the-art methods.
          
    \item \emph{How does each component impact the performance of \lisa?}
          \lisa consists of two stages, each comprising multiple components that contribute to its overall performance.
          We conduct ablation studies to understand the contribution of each component to \lisa's overall effectiveness.
          
    \item \emph{How effective is \lisa at detecting real-world functional bugs?}
          We evaluate \lisa's effectiveness in detecting bugs.
          Specifically, we focus on previously reported functional bugs in real-world projects, as these cases have been manually verified and confirmed by actual developers.
\end{enumerate}

\paragraph{Baseline}
To address these questions, we compare \lisa with a state-of-the-art unit test generation framework and relevant fuzzing techniques.
\begin{enumerate*}
\item \citywalk~\cite{zhang2025citywalk} is a recent C/C++ unit test generation framework that uses project-dependency awareness and language-specific knowledge to improve test reliability and quality.

\item \ossfuzz~\cite{serebryany2017ossfuzz} is Google's continuous fuzzing platform for open-source software. It combines coverage-guided fuzzing with continuous integration to detect bugs and security vulnerabilities in widely used libraries.
\end{enumerate*}
\citywalk represents the state-of-the-art in LLM-based unit test generation, making it a natural benchmark for assessing \lisa's ability to generate valid and high-quality unit tests.
\ossfuzz serves as a widely adopted industrial standard for fuzzing-based bug detection, providing a strong baseline for evaluating \lisa's effectiveness in code coverage and bug detection.
For \ossfuzz, we use the official fuzzing harnesses from the \ossfuzz GitHub repository and run \texttt{libFuzzer} for 3 hours on the same library version (identical commit) as \lisa. We measure branch and line coverage for all tools independently using \texttt{llvm-cov} with identical compiler flags, following Fuzzbench's recommendation~\cite{metzman2021fuzzbench} to use Clang source-based coverage rather than fuzzer-reported metrics. \autoref{sec:design-measurement-notes} records the remaining design and measurement details (harness handling, library-level coverage, and the comparison with \textsc{Hopper}).

\paragraph{\rev{Scope of baselines}}
\rev{We considered agentic SE workflows (SWE-agent~\cite{yang2024sweagent}, OpenHands~\cite{wang2025openhands}) and recent LLM-UT methods (IntUT~\cite{nan2025intut}) and excluded them for three reasons. \emph{Language}: SWE-agent and OpenHands target Python~\cite{jimenez2024swebench} and IntUT targets Java; none supports C/C++ test generation, whereas \lisa targets C/C++ library APIs. \emph{Toolchain coupling}: their pipelines rely on language-specific infrastructure (Python execution environments; IntUT's Java static analysis for test-intention derivation), so a fair comparison would require us to re-implement each system for C/C++ rather than run an existing tool. \emph{Task and oracle mismatch}: SWE-agent and OpenHands resolve a \emph{given} issue by patching a known defect rather than discovering unknown functional bugs, whereas IntUT relies on the exact-output oracle that \lisa explicitly avoids. Our Vanilla-LLM baseline and component ablations (\autoref{tab:coverage_comparison}, \autoref{tab:bug_finding}) address the orthogonal concern that \lisa's gains might come from its iterative generate-and-repair loop rather than the invariant oracle: they disable \lisa's components while keeping the same underlying LLM. We ground key parameters similarly: $N{=}3$ follows from the sensitivity analysis in \autoref{tab:ngram_sensitivity}, and the maximum condensation factor $\alpha_{\min}{=}0.5$ follows from the analytical derivation of \acne in \autoref{sec:n-gram} and \autoref{sec:choice-of-alpha-min}.}
\begin{table}[t]
    \centering
    \footnotesize
    \caption{%
        Benchmark for evaluation.
        LoC: line of code.
        \#APIs: number of public APIs.
    }
    \label{tab:benchmark}
    
\begin{tabular}{lcrrr}
	\toprule
	Library          & Commit ID        & LoC  & \#APIs    \\
	\midrule
	\textbf{zlib}    & \texttt{5a82f71} & 30k  & \num{87}  \\
	\textbf{libpng}  & \texttt{0f07f70} & 57k  & \num{246} \\
	\textbf{libpcap} & \texttt{d81c01c} & 58k  & \num{84}  \\
	\textbf{sqlite3} & \texttt{7efded5} & 413k & \num{289} \\
	\textbf{lcms}    & \texttt{8888d84} & 45k  & \num{286} \\
	\textbf{cJSON}   & \texttt{c859d25} & 10k  & \num{76}  \\
	\textbf{re2}     & \texttt{a4b2aee} & 28k  & \num{70}  \\
	\bottomrule
\end{tabular}

\end{table}

\subsection{\lisabench}
Existing benchmarks are insufficient for functional-bug-oriented test generation. Fuzzing benchmarks such as Magma~\cite{hazimeh2020magma} and FuzzBench~\cite{metzman2021fuzzbench} curate only crash-triggering bugs (e.g., overflows, use-after-free) that manifest as memory errors, not the silent logic errors functional bugs produce. LLM-UT benchmarks such as TestGenEval~\cite{jain2025testgeneval} rely on pass rate and coverage, which are insensitive to the \emph{oracle problem}: an assertion-free suite trivially achieves high pass rate and coverage yet detects nothing. Neither family, therefore, evaluates whether generated tests can expose functional defects.

To fill this gap, we introduce \lisabench, a benchmark specifically designed for evaluating functional-bug-oriented unit test generation for C/C++ libraries.
We select \nbench real-world open-source C/C++ projects from GitHub, \rev{five} of which include historical bugs.
The projects are selected according to four criteria: (1) the library exposes a public C/C++ API; (2) it builds and executes in our environment with reasonable effort; (3) it has sufficient functional complexity for invariant-based testing; and (4) historical bug reports or commits are available, enabling confirmed functional-bug cases. We also intentionally include projects from different application domains and codebase sizes to avoid overfitting the evaluation to a single library style.
\rev{As shown in \autoref{tab:benchmark}, these projects span diverse application domains: data compression (zlib~\cite{zlib}), PNG image processing (libpng~\cite{libpng}), network packet capture (libpcap~\cite{libpcap}), an embedded SQL database engine (sqlite3~\cite{sqlite3}), color management (lcms~\cite{lcms}), structured-format parsing (cJSON~\cite{cJSON}), and regular-expression matching (re2~\cite{re2}).}
Spanning small to very large code bases, these libraries let us curate historically reported, developer-confirmed functional bugs and directly measure whether generated tests detect real-world defects.

\subsection{Metrics} \label{sec:metric}

\subsubsection{Evaluating \lisaapi}
We evaluate \lisaapi using four standard metrics~\cite{schafer2024An,liu2025llmbasedunittestgeneration}: Compilation Success Rate (CSR) and Execution Success Rate (ESR) measure the fractions of generated \apiseq{s} that compile and run without runtime errors, respectively; Line Coverage (LC) and Branch Coverage (BC) are measured with \texttt{llvm-cov}.

\subsubsection{Evaluating \lisainv} \label{sec:eval-metrics}
We evaluate \lisainv using two metrics that reflect both its ability to generate assertions and their quality.

\paragraph{Average Unique Verifications Count (AUVC)}
AUVC measures the expected number of valid, unique verifications produced per input seed. Let $\mathcal{T}_{total}$ be the valid seeds from \lisaapi and $\mathcal{T}_{pass}\subseteq\mathcal{T}_{total}$ those that still compile and execute after invariant insertion; for $t\in\mathcal{T}_{pass}$, let $U(t)$ be its number of unique verifications. Then $\text{AUVC} = \sum_{t\in\mathcal{T}_{pass}} U(t)\,/\,|\mathcal{T}_{total}|$. Test cases whose invariants are invalid or violate intended behavior are excluded from $\mathcal{T}_{pass}$ and thus contribute zero, penalizing failed generation. \rev{Uniqueness is semantic, not syntactic: two annotators with 3+ years of C/C++ experience label each assertion, treating repeated queries over an evolving stream state as distinct while collapsing syntactically different but equivalent predicates such as \texttt{x >= 1} and \texttt{x > 0} for $x \in \mathbb{Z}$. Disagreements are resolved by discussion to consensus (full protocol in \autoref{sec:auvc-protocol}).}

\paragraph{Historical Bug Detection Rate (HBDR)}
We measure fault detection via \emph{historical bug re-introduction}: for \rev{five} libraries, we revert five previously fixed bugs in each (\rev{25} bugs in total). A bug counts as detected when the generated test passes on the patched branch but fails on the buggy one. \rev{To draw the five bugs for each library, we scan developer-fixed commits in reverse chronological order and keep the first five whose fix message flags a non-crashing functional defect and whose buggy revision reproduces deterministically on our toolchain; we skip commits that fail either filter and examine the next candidate. This protocol fixes the corpus rather than hand-picking it.} \rev{We prefer this design to mutation score, the alternative common in test-generation studies~\cite{Andrews2005Is}: synthetic mutants often diverge from the fault distribution of real software~\cite{Papadakis2018Are}, whereas a developer-verified historical fix bounds each bug's semantics and repair on evidence that a human maintainer already accepted.}
\section{Results and Analysis}

\subsection{Coverage and Success Rate}
\begin{table}[t]
    \centering
    \footnotesize
    \caption{Comparison of Compilation and Execution Success Rates between Vanilla LLM and \lisa.}
    \label{tab:successful_rate}
    
\begin{tabular}{lcccc}
    \toprule
    \multirow{2}{*}{\textbf{Library}} & \multicolumn{2}{c}{\textbf{Vanilla LLM}} & \multicolumn{2}{c}{\textbf{\lisa}}                                     \\
    \cmidrule(lr){2-3} \cmidrule(lr){4-5}
                                      & \textbf{Comp.}                           & \textbf{Exec.}                     & \textbf{Comp.}  & \textbf{Exec.}  \\
    \midrule
    \textbf{libpcap}                  & 80.0\%                                   & 45.2\%                             & \textbf{99.3\%} & \textbf{90.4\%} \\
    \textbf{lcms}                     & 92.0\%                                   & 30.6\%                             & \textbf{95.0\%} & \textbf{77.5\%} \\
    \textbf{zlib}                     & 94.3\%                                   & 92.6\%                             & \textbf{99.5\%} & \textbf{98.6\%} \\
    \textbf{sqlite3}                  & 88.3\%                                   & 28.6\%                             & \textbf{98.3\%} & \textbf{91.5\%} \\
    \textbf{re2}                      & 96.4\%                                   & 73.0\%                             & \textbf{100\%}  & \textbf{99.9\%} \\
    \textbf{cJSON}                    & 56.8\%                                   & 56.1\%                             & \textbf{100\%}  & \textbf{99.4\%} \\
    \textbf{libpng}                   & 65.1\%                                   & 38.2\%                             & \textbf{93.5\%} & \textbf{79.2\%} \\
    \midrule
    \textbf{average}                  & 81.8\%                                   & 52.0\%                             & \textbf{97.9\%} & \textbf{90.9\%} \\
    \bottomrule
\end{tabular}

\end{table}

To ensure a fair and rigorous comparison, we enforced a strict 3-hour time budget for all dynamic execution experiments,
including \lisa, the Vanilla LLM baseline, and \ossfuzz.
We present our results in \autoref{tab:successful_rate} and \autoref{tab:coverage_comparison}.
\lisa significantly outperforms both the Vanilla LLM and \ossfuzz across multiple dimensions.

\begin{table}[t]  
    \centering
    \small
    \caption{Code Coverage Comparison: \lisa vs. Vanilla LLM vs. \ossfuzz. All tools were run for 3 hours. \textbf{Bold} indicates the best performance.}
    \label{tab:coverage_comparison}
    \begin{adjustbox}{max width=\columnwidth}
        
\begin{tabular}{lcccccc}
    \toprule
    \multirow{2}{*}{\textbf{Library}} & \multicolumn{2}{c}{\textbf{Vanilla LLM}} & \multicolumn{2}{c}{\textbf{\ossfuzz}} & \multicolumn{2}{c}{\textbf{\lisa}}                                                          \\
    \cmidrule(lr){2-3} \cmidrule(lr){4-5} \cmidrule(lr){6-7}
                                      & \textbf{Line}                            & \textbf{Branch}                       & \textbf{Line}                      & \textbf{Branch}  & \textbf{Line}    & \textbf{Branch}  \\
    \midrule
    \textbf{libpcap}                  & 21.06\%                                  & 24.60\%                               & 25.79\%                            & 27.51\%          & \textbf{27.48\%} & \textbf{30.32\%} \\
    \textbf{lcms}                     & 30.28\%                                  & 20.82\%                               & 44.74\%                            & \textbf{34.44\%} & \textbf{49.18\%} & 27.59\%          \\
    \textbf{zlib}                     & 53.32\%                                  & 41.43\%                               & 65.99\%                            & 52.81\%          & \textbf{73.01\%} & \textbf{60.01\%} \\
    \textbf{sqlite3}                  & 14.40\%                                  & 10.54\%                               & \textbf{35.38\%}                   & \textbf{26.60\%} & 31.06\%          & 22.46\%          \\
    \textbf{re2}                      & 36.09\%                                  & 41.03\%                               & 29.77\%                            & 31.62\%          & \textbf{39.77\%} & \textbf{46.54\%} \\
    \textbf{cJSON}                    & 51.01\%                                  & 47.53\%                               & 42.28\%                            & 45.52\%          & \textbf{76.80\%} & \textbf{69.07\%} \\
    \textbf{libpng}                   & 13.01\%                                  & 8.92\%                                & \textbf{22.57\%}                   & \textbf{17.94\%} & 20.21\%          & 14.92\%          \\
    \midrule
    \textbf{\rev{average}}            & \rev{31.31\%}                            & \rev{27.84\%}                         & \rev{38.07\%}                      & \rev{33.78\%}    & \rev{\textbf{45.36\%}} & \rev{\textbf{38.70\%}} \\
    \bottomrule
\end{tabular}
 
    \end{adjustbox}
\end{table}

\paragraph{Superiority over Vanilla LLM} Our Vanilla LLM baseline disables all of \lisa's key components (feedback, repair, chunking, rules), retaining only basic correctness checks and a minimal prompt with API knowledge. The results demonstrate that \lisa's superior performance stems from its architectural design rather than the underlying LLM capability. \lisa achieves a 38\% higher average execution rate and more than doubles the line coverage in complex libraries such as \texttt{sqlite3} (31.06\% vs. 14.40\%). This confirms \lisa's $n$-gram feedback and repair mechanisms are essential for guiding LLM beyond shallow happy paths to explore deep, unseen code.

\paragraph{Comparison with Fuzzing} Compared to the industrial standard \ossfuzz, \lisa demonstrates remarkable efficiency, achieving higher coverage on the majority of libraries.
\rev{\lisa leads \ossfuzz in line coverage on five of the seven libraries, with gains ranging from +1.7\% on \texttt{libpcap} to +34.5\% on \texttt{cJSON} (e.g., +7.0\% on \texttt{zlib} and +10.0\% on \texttt{re2}).} For libraries requiring structured inputs (\texttt{cJSON}, \texttt{re2}) or complex state transitions (\texttt{zlib}), the LLM's inherent knowledge of syntax and state flows bypasses the grammar barriers and state-space traps that hinder stochastic fuzzers in short-duration runs. Conversely, \ossfuzz leads on the two libraries with extensive error-handling branches for malformed inputs (\texttt{sqlite3}, \texttt{libpng}), where input mutation excels; however, the coverage gained from these error paths rarely translates into the discovery of functional bugs.
\begin{figure}[t]
    \centering
        \begin{tikzpicture}
        \begin{axis}[
            ybar,
            width=\linewidth, 
            height=4.2cm,     
            bar width=8pt,    
            legend style={at={(0.5,0.98)}, anchor=north, legend columns=2, font=\scriptsize, /tikz/every even column/.append style={column sep=0.4cm}},
            symbolic x coords={zlib, re2, libpng, libpcap, lcms, sqlite, cJSON},
            xtick=data,
            nodes near coords, 
            nodes near coords style={font=\scriptsize, color=black}, 
            ymin=0, ymax=50,   
            ylabel={AUVC},
            ylabel near ticks,
            xlabel={Libraries},
            xlabel near ticks,
            label style={font=\scriptsize},
            x tick label style={font=\scriptsize},
            y tick label style={font=\scriptsize},
            grid=major,
            major grid style={dashed, gray!30}
        ]
            \addplot[fill=blue!60, draw=black] coordinates {
                (zlib, 34.3) (re2, 37.4) (libpng, 23.6) (libpcap, 21.9) (lcms, 17.7) (sqlite, 20.6) (cJSON, 44.3)
            };
            
            \addplot[fill=orange!50, draw=black, postaction={pattern=north east lines}] coordinates {
                (zlib, 7.4) (re2, 9.5) (libpng, 7.9) (libpcap, 8.2) (lcms, 7.6) (sqlite, 9.1) (cJSON, 11.8)
            };
            
            \legend{\lisa, \citywalk}
        \end{axis}
    \end{tikzpicture} 
    \caption{Comparison of AUVC between \lisa and \citywalk.}
    \label{fig:auvc_comparison} 
\end{figure}

\subsection{Quality of Generated Oracles: AUVC}
While code coverage measures how much code is executed, it does not reflect whether the execution is properly verified. 
We use AUVC (\autoref{sec:eval-metrics}) to evaluate the semantic richness of the generated tests.
We compare \lisa against \citywalk, the state-of-the-art LLM-based unit test generation framework.

\paragraph{High Density of Verifications} As illustrated in \autoref{fig:auvc_comparison}, \lisa demonstrates a substantial advantage in assertion density across all evaluated libraries. On average, \lisa generates \textbf{3.25$\times$} more unique verifications per test than \citywalk. For example, in \texttt{cJSON}, \lisa achieves an AUVC of 44.3 compared to \citywalk's 11.8. 

\paragraph{Semantic Depth} This performance gap stems from fundamental differences in generation strategies. \citywalk generates tests in a single pass ("one-test-per-method"), often producing simple checks such as \texttt{assert(ptr != NULL)} to ensure compilability. In contrast, \lisa's dual-stage design allows \lisainv to focus exclusively on logic verification. By leveraging the constructed API knowledge base, \lisa injects rich, state-aware invariants, for example, validating complex struct fields or return states, that \citywalk misses. This high AUVC confirms that \lisa not only executes code but rigorously verifies its correctness, directly contributing to its superior bug-finding capability.

\subsection{Ablation Study}
We perform an ablation study on \lisa's components, denoted \lisaapi and \lisainv. 
\subsubsection{Components of \lisaapi}
To rigorously assess the contribution of each component within \lisaapi, we conduct an ablation study on three representative libraries: \texttt{sqlite3}, \texttt{cJSON}, and \texttt{zlib}. \rev{We choose these three for three reasons: they span the complexity spectrum of our benchmark, ranging from a small structured-format parser (\texttt{cJSON}), through a stateful streaming codec (\texttt{zlib}), to a large and highly stateful database engine (\texttt{sqlite3}); they all belong to the bug-finding benchmark and thus reflect the settings where \lisa is ultimately evaluated; and limiting the ablation to three subjects keeps the cost of running every component variant tractable.} We derive four variants by selectively disabling specific modules from the full framework:
\begin{itemize*}
\item \textbf{w/o Rules:} Removes project-specific constraints and headers from the prompt.
\item \textbf{w/o Feedback:} Replaces the $n$-gram guided feedback mechanism with random selection.
\item \textbf{w/o Repair:} Disables the error detection and repair loop.
\item \textbf{w/o Examples:} Removes the few-shot successful examples from the prompt context.
\end{itemize*}
We evaluate these variants using the same metrics as RQ1. As shown in \autoref{tab:ablation_lisa1}, every component within \lisaapi is indispensable for effective API sequence generation.

For simpler libraries (\texttt{cJSON}, \texttt{zlib}), execution stays above 95\% even under ablation, yet the full framework still yields the best coverage (e.g., +7.43\% line coverage over random selection on \texttt{zlib}). For complex, state-sensitive libraries such as \texttt{sqlite3}, the components become critical: removing repair (\textit{w/o Repair}) drops execution to 51.4\% and line coverage to 25.53\%, as LLMs frequently hallucinate invalid states. The $n$-gram feedback is the single most important component for coverage: its removal yields the lowest coverage across all libraries (e.g., 21.40\% line coverage on \texttt{sqlite3}), since it otherwise steers generation toward under-explored API combinations.

\begin{table}[t]  
    \centering
    \small
    \caption{%
        Ablation study of \lisaapi components. 
        w/o Feedback: the variant with only random selection.
        Comp.: compile rate.
        Exec.: execution rate.
    }
    \label{tab:ablation_lisa1}
    \begin{adjustbox}{max width=\columnwidth}
    \begin{tabular}{llcccc}
    \toprule
    \textbf{Library} & \textbf{Variant}      & \textbf{Comp.} & \textbf{Exec.} & \textbf{Line Cov.} & \textbf{Branch Cov.} \\
    \midrule
    \multirow{5}{*}{\textbf{zlib}}
                     & w/o Rules             & 99.3\%              & 98.1\%              & 66.48\%            & 52.44\%              \\
                     & w/o Feedback & 99.0\%              & 98.4\%              & 65.58\%            & 52.48\%              \\
                     & w/o Examples          & 99.5\%              & 98.3\%              & 67.31\%            & 53.92\%              \\
                     & w/o Repair            & 98.8\%              & 95.7\%              & 67.58\%            & 54.30\%              \\
    \cmidrule{2-6}
                     & \textbf{\lisaapi}     & \textbf{99.5\%}     & \textbf{98.6\%}     & \textbf{73.01\%}   & \textbf{60.01\%}     \\
    \midrule
    \multirow{5}{*}{\textbf{cJSON}}
                     & w/o Rules             & 99.9\%              & 98.8\%              & 72.54\%            & 66.22\%              \\
                     & w/o Feedback & 100\%               & 98.5\%              & 71.80\%            & 63.95\%              \\
                     & w/o Examples          & 99.3\%              & 98.2\%              & 73.20\%            & 66.60\%              \\
                     & w/o Repair            & 99.8\%              & 97.0\%              & 72.72\%            & 65.37\%              \\
    \cmidrule{2-6}
                     & \textbf{\lisaapi}     & \textbf{100\%}      & \textbf{99.4\%}     & \textbf{76.80\%}   & \textbf{69.07\%}     \\
    \midrule
    \multirow{5}{*}{\textbf{sqlite3}}
                     & w/o Rules             & 97.6\%              & 83.2\%              & 26.88\%            & 19.37\%              \\
                     & w/o Feedback & 98.0\%              & 70.4\%              & 21.40\%            & 15.22\%              \\
                     & w/o Examples          & 96.2\%              & 74.8\%              & 27.27\%            & 19.68\%              \\
                     & w/o Repair            & 90.9\%              & 51.4\%              & 25.53\%            & 17.87\%              \\
    \cmidrule{2-6}
                     & \textbf{\lisaapi}     & \textbf{98.3\%}     & \textbf{91.5\%}     & \textbf{31.06\%}   & \textbf{22.46\%}     \\
    \bottomrule
\end{tabular}
 
    \end{adjustbox}
\end{table}

\subsubsection{Component of \lisainv}
\label{sec:ablation_lisa2}

Unlike \lisaapi, whose primary goal is to produce valid and diverse execution paths (measured by coverage), the contribution of the invariant generation component (\lisainv) is intrinsically linked to the ability to detect faults. A simple count of assertions (like AUVC) in ablation variants does not fully capture the \textit{correctness} or \textit{utility} of those assertions. 
Therefore, to provide a realistic evaluation, we defer the detailed ablation study of \lisainv (analyzing the impact of \textit{Repair}, \textit{Chunking}, and \textit{Knowledge}) to \textbf{RQ3}. There, we evaluate how each component contributes to functional-bug detection, including two knowledge-source variants: one using only Daikon outputs and one using only official documentation.

Although several \lisaapi components are implemented through prompts, \lisa is not simply a prompt-engineering variant. Its main contribution is the two-stage design: it first explores executable API sequences, then performs invariant generation as a separate stage grounded in documentation and chunk-level reasoning. This decomposition changes both the exploration target and how test oracles are constructed. The benefit is evident in the \lisaapi ablations (\autoref{tab:ablation_lisa1}) and in the bug-finding results, where removing \lisainv components consistently reduces detection effectiveness. These results suggest the gains come from the overall framework design, with prompting serving as merely one implementation mechanism.

\subsection{Bug-Finding Ability}
We evaluate the practical bug-finding ability of \lisa on \lisabench, targeting \rev{25} reproduced historical functional bugs \rev{across five libraries}.
To ensure a fair comparison, we aligned the execution constraints with the operational nature of each tool:
\begin{itemize*}
\item \ossfuzz: Guaranteed a continuous 6-hour fuzzing window to maximize mutation depth.
\item \citywalk: Allowed to run to completion, processing the entire set of focal methods without time truncation.
\item \lisa: Operated within a strict 3 hours pipeline for \lisaapi seed generation and generated final unit tests through \lisainv with the total runtime capped at 6 hours.
\end{itemize*}

\begin{table*}[t]  
    \centering
    \small
    \caption{Bug-finding results.}
    \label{tab:bug_finding}
    \scalebox{0.95}{
        \begin{tabular}{lc|cc|c|ccccc}
\toprule
\multirow{2}{*}{\textbf{Library}} & \multirow{2}{*}{\textbf{Total}} & \multicolumn{2}{c|}{\textbf{Baselines}} & \textbf{Ours} & \multicolumn{5}{c}{\textbf{Ablation of \lisainv}} \\
 & & \textbf{\citywalk} & \textbf{\ossfuzz} & \textbf{\lisa (Full)} & \textbf{w/o Repair} & \textbf{w/o Chunk} & \textbf{w/o Knowledge} & \textbf{Daikon Only} & \textbf{Doc Only}\\
\midrule
\textbf{cJSON}   & 5 & 1 & 0 & \textbf{3} & 1 & 2 & 1 & 2 & 2 \\
\textbf{lcms}    & 5 & 1 & 0 & \textbf{3} & 1 & 1 & 1 & 1 & 1 \\
\textbf{zlib}    & 5 & 0 & 1 & \textbf{2} & 1 & 0 & 0 & 1 & 0 \\
\textbf{sqlite3} & 5 & 1 & 0 & \textbf{2} & 1 & 1 & 1 & 1 & 1 \\
\rev{\textbf{libpng}}  & \rev{5} & \rev{0} & \rev{1} & \rev{\textbf{2}} & \rev{0} & \rev{0} & \rev{0} & \rev{0} & \rev{0} \\
\midrule
\textbf{Total}   & \rev{25} & \rev{3 (12\%)} & \rev{2 (8\%)} & \rev{\textbf{12 (48\%)}} & \rev{4 (16\%)} & \rev{4 (16\%)} & \rev{3 (12\%)} & \rev{5 (20\%)} & \rev{4 (16\%)} \\
\bottomrule
\end{tabular} 
    }
\end{table*}

As shown in \autoref{tab:bug_finding}, \lisa (Full) achieved a total detection rate of \rev{48\% (12/25)}, significantly outperforming both \citywalk (\rev{12\%}) and \ossfuzz (\rev{8\%}).

Fisher's exact test confirms that these differences are statistically significant: \lisa vs. \ossfuzz (\rev{$p = 0.0036$, odds ratio $= 10.6$}) and \lisa vs. \citywalk (\rev{$p = 0.0121$, odds ratio $= 6.77$}). Both $p$-values are below 0.05.

The baselines' poor performance reveals their structural limitations. \ossfuzz (\rev{2/25}) detects only crash-inducing faults (e.g., memory corruption) and misses functional logic errors that do not trigger runtime aborts. \rev{Tellingly, the single libpng defect that \ossfuzz caught but \lisa missed was a use-after-free that aborts at runtime rather than a silent logic error, underscoring the complementary scopes of the two tools.} \citywalk (\rev{3/25}), despite using LLMs, achieves a low detection rate because its "one-test-per-method" strategy prioritizes compilability through extensive mocking and isolated testing~\cite{zhang2025citywalk}. While this yields good coverage for focal functions, it cannot construct the complex, multi-step API sequences necessary to trigger state-related functional bugs.

In contrast, \lisa's dual-stage design enables superior detection. \lisaapi's feedback-guided exploration constructs diverse, valid API sequences that penetrate deep program states, while \lisainv injects invariants that reveal silent logic errors, helping \lisa find \rev{nine} more bugs than the best baseline.

The ablation study validates the necessity of \lisainv's three core components. First, repair is critical for test survivability: without it (\textit{w/o Repair}, 4 bugs), many tests terminate prematurely due to minor assertion failures or syntax errors and never reach the API calls that trigger bugs. Second, chunking is necessary for long sequences (\textit{w/o Chunk}, 4 bugs): without it, the model struggles to maintain attention across the entire program, missing intermediate states and failing to place assertions at key points. Third, the knowledge base provides essential semantic guidance (\textit{w/o Knowledge}, 3 bugs). Neither source alone suffices: Daikon-only introduces substantial noise (5 bugs), while documentation-only covers basic requirements but misses semantic constraints (4 bugs). Combining both gives \lisa broader semantic coverage and stronger filtering of spurious invariants.

\section{Discussion and Future Work}
\subsection{Case Study: Silent Logic Error in SQLite}
\begin{figure}[th]
    \scriptsize
	\input{code/sqlite}
\end{figure}
To illustrate \lisa's ability to find functional bugs, we analyze a logic regression in SQLite (commit d443f0a). As described in the commit message, a regression in the query optimizer caused the SQL expression \textbf{"0 OR 2"} to be erroneously evaluated as \textbf{2} (the integer value of the second operand) instead of 1 (Boolean TRUE). This is a classic logic error: the program executes valid CPU instructions and manages memory correctly, but the mathematical result is wrong. As shown in Listing~\ref{lst:sqlite_poc}, \lisa exposed this defect as follows:
\begin{itemize}
    \item \textbf{API Sequence Construction:} \lisa first generated a valid call sequence to prepare and execute the specific query \texttt{"SELECT 0 OR 2"} using the \texttt{sqlite3\_prepare\_v2} and \texttt{sqlite3\_step} APIs.
    \item \textbf{Invariant Assertion:} Crucially, instead of merely checking for a successful return code (e.g., \texttt{SQLITE\_ROW}), \lisa inferred the semantic invariant of the SQL operation. It generated the strict assertion \texttt{assert(val == 1)}, enforcing that the logical OR operation must yield a Boolean True (normalized to 1 by \lisa).
\end{itemize}
Consequently, although the buggy version returned \num{2} due to an incorrect bitwise optimization and would pass crash-based fuzzing, \lisa's test triggered an assertion failure.  

\subsection{\rev{Threats to Validity and Limitations}} \label{sec:threats}

\rev{We organize the residual threats along the standard three axes. \emph{Internal validity}: LLM non-determinism (\emph{Reproducibility}) and hyperparameter choices (\emph{Hyperparameter Settings}) could bias the measured effect; the variance study (\autoref{tab:variance}) shows the residual noise is well below \lisa's margin over the baselines. \emph{External validity}: the 25-bug, five-library, C/C++-only corpus limits generalization (\emph{Benchmark scale}); we mitigate this with a fixed selection protocol (\autoref{sec:metric}) and consistent per-library gains rather than aggregate-only claims. \emph{Construct validity}: a violated invariant is a \emph{high-confidence bug candidate}, not a proven defect (\emph{Final oracle verification}), and \lisa's recall is bounded by documentation completeness (\autoref{sec:inv-def}).}

\paragraph{\rev{Benchmark scale and representativeness}}
\rev{\lisabench currently comprises \num{25} developer-confirmed historical functional bugs spanning five libraries from distinct domains. While modest in absolute size, the set is assembled by a fixed protocol rather than hand-picked: each bug is drawn from a previously fixed historical commit, included only if it reproduces deterministically on our toolchain and exhibits a non-crashing functional symptom, and excluded otherwise. \lisa's advantage is consistent across the per-library breakdown (\autoref{tab:bug_finding}) rather than driven by any single library, and its margin over both baselines remains statistically significant (Fisher's exact test, $p<0.05$). Nonetheless, the absolute scale limits the precision of the point estimates, and evaluation on a larger and more varied bug corpus is an important direction for future work.}
\paragraph{\rev{Reproducibility and non-determinism}}
\rev{Although the underlying LLM is non-deterministic, this affects only \emph{which} tests \lisa generates, not their stability once generated: every generated test is a fixed C/C++ program that compiles and executes identically on every run, so \lisa does not produce flaky tests. Non-determinism during generation is further dampened by the two repair loops, which execute each candidate against the reference build and discard transient or unstable assertions, retaining only invariants that pass stably. To quantify the residual variance, we repeat the full pipeline three times on the three representative libraries (\autoref{tab:variance}). Compilation and execution success rates are highly stable, while line and branch coverage vary by only a few percentage points (standard deviation $\le 3.5$), far smaller than \lisa's margin over the baselines; in every run \lisa exceeds the Vanilla baseline on all three libraries and its ranking against \ossfuzz is unchanged. Bug detection is likewise stable (\autoref{tab:variance}, last column): all but one of the detected bugs recurs in every run. The main tables report one representative run from this set.}
\begin{table}[t]
  \centering
  \footnotesize
  \caption{\rev{Run-to-run variance of \lisa over three repetitions on three representative libraries. Comp./Exec.: compilation/execution success rate (mean$\pm$std, \%); Cov.\ in \%; Bugs: number of the 5 historical bugs detected in each of the three runs.}}
  \label{tab:variance}
  \begin{tabular}{lccccc}
    \toprule
    Library & Comp. & Exec. & Line Cov. & Branch Cov. & Bugs \\
    \midrule
    \rev{\texttt{zlib}}    & \rev{99.8$\pm$0.3} & \rev{98.3$\pm$0.5} & \rev{70.5$\pm$2.1} & \rev{57.1$\pm$2.5} & \rev{2, 2, 2} \\
    \rev{\texttt{cJSON}}   & \rev{100$\pm$0.0}  & \rev{98.9$\pm$0.5} & \rev{74.6$\pm$3.5} & \rev{67.5$\pm$2.9} & \rev{3, 3, 2} \\
    \rev{\texttt{sqlite3}} & \rev{98.5$\pm$0.9} & \rev{90.0$\pm$3.3} & \rev{31.4$\pm$1.8} & \rev{22.8$\pm$1.5} & \rev{2, 2, 2} \\
    \bottomrule
  \end{tabular}
\end{table}
\paragraph{Hyperparameter Settings}
The choice of hyperparameters, such as the value of $n$ in $n$-gram and the repair attempt limit (5), along with the selected LLM, relies on preliminary tuning and prior studies. We acknowledge that these settings may not be optimal. However, to mitigate bias, we maintained fixed configurations across all comparative methods. Future work could investigate automated hyperparameter optimization or sensitivity analysis to better understand their impact on \lisa's performance.
\paragraph{Final oracle verification and documentation coverage}
Although \lisa uses invariants and iterative repair to mitigate the oracle problem, final verification still requires human inspection. In particular, a failing assertion may indicate either a genuine implementation defect or a misinterpretation of the specification by the LLM. By design, \lisa's chunk-invariant reasoning and verification-and-repair loop produce high-confidence assertions; nevertheless, residual false positives cannot be fully eliminated. \rev{A second, orthogonal limit is recall: \lisa can only assert what documentation (plus filtered Daikon candidates) makes explicit, so conventional-but-undocumented contracts that experienced developers take for granted yield no invariant, and \lisa can miss bugs that violate only such unwritten conventions. Grounding invariants in documentation is a deliberate trade-off, keeping them semantically valid and low-noise at the cost of leaving undocumented intent to future work on mining usage patterns and existing test suites.} More broadly, the test oracle problem remains a fundamental open challenge in software testing~\cite{Barr2015oracle}, and resolving it more fully is an important direction for future work.
\paragraph{Automation}
Constructing the API knowledge database requires one-time manual effort, so the overall pipeline is not yet fully automated. In principle, this step could be delegated to an AI agent to further reduce manual effort. However, it remains unclear whether such automation can match the quality and reliability of human curation. Developing and evaluating agent-based alternatives for this stage is an important direction for future work.

\subsection{\rev{Future Work}}
\rev{Beyond automating knowledge-base curation and mining undocumented conventions, as discussed above, two extensions look promising. First, integrating richer API-dependency models, such as those used by Hopper~\cite{chen2023hopper} and \citywalk~\cite{zhang2025citywalk}, could further raise the validity of generated sequences beyond what the $n$-gram model captures. Second, extending \lisa to managed-language libraries (Python, Java) would test the portability of the two-stage decomposition beyond the C/C++ setting we studied here.}
\section{Related Work}

\subsection{Automated Software Testing for Libraries}

\paragraph{Library fuzzing}
Library fuzzing typically uses LLVM's \libfuzzer~\cite{serebryany2016libfuzzer}, for which developers write \emph{fuzz drivers} that parse inputs and invoke APIs; \ossfuzz~\cite{serebryany2017ossfuzz} maintains such drivers for over \num{1000} projects. As writing drivers requires expert knowledge of library semantics, recent work automates their generation~\cite{chen2023hopper,lyu2024promptfuzz,zhang2026llamafuzz}: Hopper~\cite{chen2023hopper} models API calls with a lightweight interpreter and grammar to explore valid API compositions (\emph{API sequences} in this paper), and \promptfuzz~\cite{lyu2024promptfuzz} prompts an LLM to generate fuzz drivers, iteratively mutating the API combinations under coverage feedback to reach deep implementation bugs.

\paragraph{Automated unit testing}
Heuristic-based unit test generation has been studied for decades but targets mainly object-oriented software (Java, C\#), generalizing poorly to C/C++ because it focuses on object internal state rather than library-level API interactions~\cite{pacheco2007randoop,fraser2011evosuite}. \randoop~\cite{pacheco2007randoop} mutates method-call sequences and checks language-level \emph{oracle} properties, while \mutest and \evosuite generate finer-grained oracles via mutation analysis; however, the injected artificial bugs are often unrealistic~\cite{panichella2020revisiting,rao2023catlm}, leaving these heuristics with low maintainability and usability~\cite{panichella2020revisiting}.

\subsection{LLM-Based Unit Test Generation}

To overcome the limitations of heuristic-based unit test generation,
recent studies have proposed using the code under test as prompts to LLMs to automatically generate unit tests~\cite{watson2020learning_assert,dinella2022TOGA,nie2023TeCo,rao2023catlm,he2024unitsyn,Ryan2024code,he2025fuzzaug,zhang2025citywalk}.
This line of work has evolved from generating only assertions~\cite{watson2020learning_assert,dinella2022TOGA,nie2023TeCo}
to producing complete test functions and even full test files~\cite{rao2023catlm,he2024unitsyn,he2025fuzzaug,zhang2025citywalk}.

Several benchmarks have been proposed to evaluate LLM-based software testing~\cite{wang2025testeval,mundler2024swtbench,jain2025testgeneval}.
TestEval~\cite{wang2025testeval} and SWT-Bench~\cite{mundler2024swtbench} focus on generating single test functions,
whereas TestGenEval~\cite{jain2025testgeneval} evaluates the generation of complete test suites, similar to \evosuite.
SWT-Bench measures bug detection capability by generating tests for bug-fixing pull requests (PRs),
where the generated tests should fail before the PR is merged and pass afterward.
TestGenEval incorporates mutation score to assess whether the generated tests can detect behavioral changes in the code.
However, no existing benchmark provides a ground-truth oracle for test generation,
which is necessary for evaluating whether generated tests can detect real-world functional bugs in the current versions of software libraries.
Moreover, all existing benchmarks support only Python,
while more widely used languages in the software supply chain, such as C and C++,
remain largely neglected.

\subsection{Invariant Generation}
Program invariant generation methods can be broadly categorized into
dynamic execution-based~\cite{ernst2001daikon},
search-based~\cite{si2020code2inv},
and learning-based approaches~\cite{si2020code2inv,pei2023can,wang2024smartinv}.
Execution-based methods, such as Daikon~\cite{ernst2001daikon},
capture invariants of the \emph{current} version of the software and are therefore better suited for generating regression tests rather than detecting functional bugs.
LLM-based approaches, such as SmartInv~\cite{wang2024smartinv},
infer invariants that \emph{should hold} according to the developer's intent
by reasoning over both source code semantics and corresponding natural language descriptions~\cite{he2025evaluating}.
Such invariants can be used to detect functional bugs in software libraries.
\lisa is a generic framework capable of integrating any style of invariant generation method based on user preference,
supporting both regression testing with Daikon and functional bug detection using LLM-based invariant generalization.
\rev{\lisa reconciles the two lines: Daikon supplies candidate relations that \lisa filters against documentation, while the LLM supplies intent-level contracts from the same documentation, checked in turn against the reference build. Neither source alone flags functional bugs, but their documentation-grounded reconciliation does.}

\subsection{\rev{Positioning of \lisa}}
\rev{\lisa shares individual building blocks with prior work, but its contribution lies in how it \emph{decomposes} functional-bug detection rather than in any single component. \promptfuzz~\cite{lyu2024promptfuzz} generates \emph{fuzz drivers} for \emph{crash} detection and steers exploration by mutating API combinations under code-coverage feedback; \lisa instead synthesizes \emph{API sequences} (not drivers), targets \emph{functional} bugs, steers exploration with a new $n$-gram API-combination coverage (\autoref{sec:n-gram}), and supplies the invariant oracle that \promptfuzz lacks, reusing only \promptfuzz's front-end API extraction, not its fuzzing loop. SmartInv~\cite{wang2024smartinv} infers invariants for \emph{smart contracts} from code and intent; \lisa instead targets \emph{C/C++ library APIs}, derives invariants from API \emph{documentation} as executable test oracles, and, unlike monolithic inference, \emph{decouples} oracle construction from sequence generation, inserting invariants incrementally at chunk boundaries under a verify-and-repair loop (\autoref{sec:chunk-inv-gen}). LLM-UT methods such as \citywalk~\cite{zhang2025citywalk} follow a one-test-per-method scheme whose oracles are frequently trivial (e.g., \texttt{assert(ptr != NULL)}) and cannot construct the multi-step API sequences required to reach stateful functional bugs. The novelty of \lisa is therefore not any single technique it borrows, but this decomposition, separating reachability (sequence synthesis) from oracle construction, guiding the former with $n$-gram API-combination coverage and grounding the latter in documentation-derived chunk invariants, which no prior approach provides.}
\section{Conclusion}

We presented \lisa, a novel LLM-based unit testing framework for detecting functional bugs in software libraries.
By decoupling \apiseq exploration from invariant construction in unit tests,
\lisa leverages \rev{$n$-gram API-combination} coverage feedback and chunk-invariant reasoning to generate diverse and semantically meaningful test functions.
\rev{Across \nbench C/C++ libraries, it attains higher average branch coverage than \ossfuzz, and on \num{25} historical functional bugs it detects \num{12} (48\%), versus \num{3} for \citywalk and \num{2} for \ossfuzz. Its novelty lies in this decomposition rather than in any single borrowed component, and it mitigates rather than solves the oracle problem, emitting high-confidence bug candidates for developer confirmation.}

\section*{Data Availability}

Our artifacts and source code to reproduce the data are available at \cite{lisa_zenodo} and on GitHub.\footnote{\url{https://github.com/SecurityLab-UCD/CNTG} and \url{https://github.com/SecurityLab-UCD/CGNTG}} This public digital repository includes: (1) the \lisa implementation and prompts, (2) the full list of the \rev{25} re-introduced historical bugs (IDs, links, and corresponding patches/commits), (3) the generated test programs for the bugs found by \lisa, (4) all scripts/configurations used for coverage measurement and reporting, and (5) the rationale and sensitivity data for choosing the $n$-gram parameter $N$.

%


\printbibliography

@inproceedings{hazimeh2020magma,
  author     = {Hazimeh, Ahmad and Herrera, Adrian and Payer, Mathias},
  title      = {Magma: A Ground-Truth Fuzzing Benchmark},
  year       = {2020},
  issue_date = {December 2020},
  publisher  = {Association for Computing Machinery},
  address    = {New York, NY, USA},
  volume     = {4},
  number     = {3},
  doi        = {10.1145/3428334},
  journal    = {Proc. ACM Meas. Anal. Comput. Syst.},
  month      = nov,
  articleno  = {49},
  numpages   = {29}
}

@inproceedings{metzman2021fuzzbench,
  author    = {Metzman, Jonathan and Szekeres, L\'{a}szl\'{o} and Simon, Laurent and Sprabery, Read and Arya, Abhishek},
  title     = {FuzzBench: an open fuzzer benchmarking platform and service},
  year      = {2021},
  isbn      = {9781450385626},
  publisher = {Association for Computing Machinery},
  address   = {New York, NY, USA},
  doi       = {10.1145/3468264.3473932},
  booktitle = {Proceedings of the 29th ACM Joint Meeting on European Software Engineering Conference and Symposium on the Foundations of Software Engineering},
  pages     = {1393–1403},
  numpages  = {11},
  location  = {Athens, Greece},
  series    = {ESEC/FSE 2021}
}

@inproceedings{he2024unitsyn,
  author    = {He, Yifeng and Huang, Jiabo and Rong, Yuyang and Guo, Yiwen and Wang, Ethan and Chen, Hao},
  title     = {{UniTSyn}: A Large-Scale Dataset Capable of Enhancing the Prowess of Large Language Models for Program Testing},
  year      = {2024},
  isbn      = {9798400706127},
  publisher = {Association for Computing Machinery},
  address   = {New York, NY, USA},
  doi       = {10.1145/3650212.3680342},
  booktitle = {Proceedings of the 33rd ACM SIGSOFT International Symposium on Software Testing and Analysis},
  pages     = {1061–1072},
  numpages  = {12},
  location  = {Vienna, Austria},
  series    = {ISSTA 2024}
}

@inproceedings{he2025fuzzaug,
  title     = {{F}uzz{A}ug: Data Augmentation by Coverage-guided Fuzzing for Neural Test Generation},
  author    = {He, Yifeng  and
               Wang, Jicheng  and
               Rong, Yuyang  and
               Chen, Hao},
  booktitle = {Findings of the Association for Computational Linguistics: EMNLP 2025},
  month     = nov,
  year      = {2025},
  address   = {Suzhou, China},
  publisher = {Association for Computational Linguistics},
  doi       = {10.18653/v1/2025.findings-emnlp.847},
  pages     = {15642--15655},
  isbn      = {979-8-89176-335-7}
}

@inproceedings{he2025evaluating,
  title     = {Evaluating Program Semantics Reasoning with Type Inference in System \$F\$},
  author    = {Yifeng He and Luning Yang and Christopher Castro Gaw Gonzalo and Hao Chen},
  booktitle = {The Thirty-ninth Annual Conference on Neural Information Processing Systems Datasets and Benchmarks Track},
  year      = {2025},
  url       = {https://openreview.net/forum?id=IA9RmaP0aw}
}

@inproceedings{chen2023hopper,
  author    = {Chen, Peng and Xie, Yuxuan and Lyu, Yunlong and Wang, Yuxiao and Chen, Hao},
  title     = {Hopper: Interpretative Fuzzing for Libraries},
  year      = {2023},
  isbn      = {9798400700507},
  publisher = {Association for Computing Machinery},
  address   = {New York, NY, USA},
  doi       = {10.1145/3576915.3616610},
  booktitle = {Proceedings of the 2023 ACM SIGSAC Conference on Computer and Communications Security},
  pages     = {1600–1614},
  numpages  = {15},
  location  = {Copenhagen, Denmark},
  series    = {CCS '23}
}

@inproceedings{lyu2024promptfuzz,
  author    = {Lyu, Yunlong and Xie, Yuxuan and Chen, Peng and Chen, Hao},
  title     = {Prompt Fuzzing for Fuzz Driver Generation},
  year      = {2024},
  isbn      = {9798400706363},
  publisher = {Association for Computing Machinery},
  address   = {New York, NY, USA},
  doi       = {10.1145/3658644.3670396},
  booktitle = {Proceedings of the 2024 on ACM SIGSAC Conference on Computer and Communications Security},
  pages     = {3793–3807},
  numpages  = {15},
  location  = {Salt Lake City, UT, USA},
  series    = {CCS '24}
}

@inproceedings{rao2023catlm,
  author    = {Rao, Nikitha and Jain, Kush and Alon, Uri and Goues, Claire Le and Hellendoorn, Vincent J.},
  title     = {CAT-LM Training Language Models on Aligned Code and Tests},
  year      = {2024},
  isbn      = {9798350329964},
  publisher = {IEEE Press},
  doi       = {10.1109/ASE56229.2023.00193},
  booktitle = {Proceedings of the 38th IEEE/ACM International Conference on Automated Software Engineering},
  pages     = {409–420},
  numpages  = {12},
  location  = {Echternach, Luxembourg},
  series    = {ASE '23}
}

@inproceedings{fraser2011evosuite,
  author    = {Fraser, Gordon and Arcuri, Andrea},
  title     = {EvoSuite: automatic test suite generation for object-oriented software},
  year      = {2011},
  isbn      = {9781450304436},
  publisher = {Association for Computing Machinery},
  address   = {New York, NY, USA},
  doi       = {10.1145/2025113.2025179},
  booktitle = {Proceedings of the 19th ACM SIGSOFT Symposium and the 13th European Conference on Foundations of Software Engineering},
  pages     = {416–419},
  numpages  = {4},
  location  = {Szeged, Hungary},
  series    = {ESEC/FSE '11}
}

@inproceedings{ernst2000daikon,
  author    = {Ernst, Michael D. and Czeisler, Adam and Griswold, William G. and Notkin, David},
  title     = {Quickly detecting relevant program invariants},
  year      = {2000},
  isbn      = {1581132069},
  publisher = {Association for Computing Machinery},
  address   = {New York, NY, USA},
  doi       = {10.1145/337180.337240},
  booktitle = {Proceedings of the 22nd International Conference on Software Engineering},
  pages     = {449–458},
  numpages  = {10},
  location  = {Limerick, Ireland},
  series    = {ICSE '00}
}

@article{ernst2001daikon,
  author   = {Ernst, M.D. and Cockrell, J. and Griswold, W.G. and Notkin, D.},
  journal  = {IEEE Transactions on Software Engineering},
  title    = {Dynamically discovering likely program invariants to support program evolution},
  year     = {2001},
  volume   = {27},
  number   = {2},
  pages    = {99-123},
  doi      = {10.1109/32.908957}
}

@inproceedings{pacheco2007randoop,
  author    = {Pacheco, Carlos and Lahiri, Shuvendu K. and Ernst, Michael D. and Ball, Thomas},
  booktitle = {29th International Conference on Software Engineering (ICSE'07)},
  title     = {Feedback-Directed Random Test Generation},
  year      = {2007},
  volume    = {},
  number    = {},
  pages     = {75-84},
  doi       = {10.1109/ICSE.2007.37}
}

@inproceedings{pei2023can,
  title     = {Can Large Language Models Reason about Program Invariants?},
  author    = {Pei, Kexin and Bieber, David and Shi, Kensen and Sutton, Charles and Yin, Pengcheng},
  booktitle = {Proceedings of the 40th International Conference on Machine Learning},
  pages     = {27496--27520},
  year      = {2023},
  volume    = {202},
  series    = {Proceedings of Machine Learning Research},
  month     = {23--29 Jul},
  publisher = {PMLR},
  url       = {https://proceedings.mlr.press/v202/pei23a.html}
}

@inproceedings{wang2024smartinv,
  author    = { Wang, Sally Junsong and Pei, Kexin and Yang, Junfeng },
  booktitle = { 2024 IEEE Symposium on Security and Privacy (SP) },
  title     = {{ SmartInv}: Multimodal Learning for Smart Contract Invariant Inference },
  year      = {2024},
  volume    = {},
  issn      = {},
  pages     = {2217-2235},
  doi       = {10.1109/SP54263.2024.00126},
  publisher = {IEEE Computer Society},
  address   = {Los Alamitos, CA, USA},
  month     = May
}

@article{zhang2025citywalk,
  author    = {Zhang, Yuwei and Lu, Qingyuan and Liu, Kai and Dou, Wensheng and Zhu, Jiaxin and Qian, Li and Zhang, Chunxi and Lin, Zheng and Wei, Jun},
  title     = {CITYWALK: Enhancing LLM-Based C++ Unit Test Generation via Project-Dependency Awareness and Language-Specific Knowledge},
  year      = {2025},
  publisher = {Association for Computing Machinery},
  address   = {New York, NY, USA},
  issn      = {1049-331X},
  doi       = {10.1145/3763791},
  note      = {Just Accepted},
  journal   = {ACM Trans. Softw. Eng. Methodol.},
  month     = aug
}

@conference{serebryany2017ossfuzz,
  author    = {Kostya Serebryany},
  title     = {{OSS-Fuzz} - Google{\textquoteright}s continuous fuzzing service for open source software},
  year      = {2017},
  address   = {Vancouver, BC},
  publisher = {USENIX Association},
  month     = aug
}

@misc{zlib,
  title        = {ZLIB DATA COMPRESSION LIBRARY},
  author       = {Jean-loup Gailly and Mark Adler},
  year         = {2025},
  howpublished = {\url{https://zlib.net/}}
}

@misc{cJSON,
  title        = {cJSON: Ultralightweight JSON Parser in ANSI C},
  author       = {Dave Gamble},
  year         = {2025},
  howpublished = {\url{https://github.com/DaveGamble/cJSON}}
}

@misc{re2,
  title        = {RE2: Efficient Regular Expression Matching},
  author       = {{Google team}},
  year         = {2025},
  howpublished = {\url{https://github.com/google/re2}}
}

@misc{libpng,
  title        = {LIBPNG: Portable Network Graphics Reference Library},
  author       = {{PNG Development Group}},
  year         = {2025},
  howpublished = {\url{https://github.com/pnggroup/libpng}}
}

@misc{libpcap,
  title        = {libpcap: Portable Packet Capture Library},
  author       = {{The Tcpdump Group}},
  year         = {2025},
  howpublished = {\url{https://github.com/the-tcpdump-group/libpcap}}
}

@misc{sqlite3,
  title        = {SQLite},
  author       = {{SQLite Development Team}},
  year         = {2025},
  howpublished = {\url{https://github.com/sqlite/sqlite}}
}

@misc{lcms,
  title        = {Little CMS: A Free Color Management Engine},
  author       = {Marti Maria},
  year         = {2025},
  howpublished = {\url{https://github.com/mm2/Little-CMS}}
}

@inproceedings{jimenez2024swebench,
  author    = {Jimenez, Carlos E. and Yang, John and Wettig, Alexander and Yao, Shunyu and Pei, Kexin and Press, Ofir and Narasimhan, Karthik},
  title     = {{SWE}-bench: Can Language Models Resolve Real-World {GitHub} Issues?},
  booktitle = {The Twelfth International Conference on Learning Representations (ICLR)},
  year      = {2024}
}

@inproceedings{yang2024sweagent,
  author    = {Yang, John and Jimenez, Carlos E. and Wettig, Alexander and Lieret, Kilian and Yao, Shunyu and Narasimhan, Karthik and Press, Ofir},
  title     = {{SWE}-agent: Agent-Computer Interfaces Enable Automated Software Engineering},
  booktitle = {Advances in Neural Information Processing Systems (NeurIPS)},
  year      = {2024}
}

@inproceedings{wang2025openhands,
  author    = {Wang, Xingyao and others},
  title     = {{OpenHands}: An Open Platform for {AI} Software Developers as Generalist Agents},
  booktitle = {The Thirteenth International Conference on Learning Representations (ICLR)},
  year      = {2025}
}

@inproceedings{nan2025intut,
  author    = {Nan, Zifan and Guo, Zhaoqiang and Liu, Kui and Xia, Xin},
  title     = {Test Intention Guided {LLM}-based Unit Test Generation},
  booktitle = {Proceedings of the IEEE/ACM 47th International Conference on Software Engineering (ICSE)},
  year      = {2025},
  doi       = {10.1109/ICSE55347.2025.00243}
}

@misc{liu2025llmbasedunittestgeneration,
  title         = {LLM-based Unit Test Generation for Dynamically-Typed Programs},
  author        = {Runlin Liu and Zhe Zhang and Yunge Hu and Yuhang Lin and Xiang Gao and Hailong Sun},
  year          = {2025},
  eprint        = {2503.14000},
  archiveprefix = {arXiv},
  primaryclass  = {cs.SE},
  url           = {https://arxiv.org/abs/2503.14000}
}

@article{fenton2000quantitative,
  author   = {Fenton, N.E. and Ohlsson, N.},
  journal  = {IEEE Transactions on Software Engineering},
  title    = {Quantitative analysis of faults and failures in a complex software system},
  year     = {2000},
  volume   = {26},
  number   = {8},
  pages    = {797-814},
  doi      = {10.1109/32.879815}
}

@inproceedings{zhao2017ci,
  author    = {Zhao, Yangyang and Serebrenik, Alexander and Zhou, Yuming and Filkov, Vladimir and Vasilescu, Bogdan},
  booktitle = {2017 32nd IEEE/ACM International Conference on Automated Software Engineering (ASE)},
  title     = {The impact of continuous integration on other software development practices: A large-scale empirical study},
  year      = {2017},
  url       = {https://doi.org/10.1109/ASE.2017.8115619}
}

@inproceedings{beller2015when,
  author    = {Beller, Moritz and Gousios, Georgios and Panichella, Annibale and Zaidman, Andy},
  title     = {When, How, and Why Developers (Do Not) Test in Their IDEs},
  year      = {2015},
  isbn      = {9781450336758},
  publisher = {Association for Computing Machinery},
  address   = {New York, NY, USA},
  url       = {https://doi.org/10.1145/2786805.2786843},
  booktitle = {Proceedings of the 2015 10th Joint Meeting on Foundations of Software Engineering},
  pages     = {179–190},
  numpages  = {12},
  location  = {Bergamo, Italy},
  series    = {ESEC/FSE 2015}
}

@inproceedings{serebryany2016libfuzzer,
  author    = {Serebryany, Kosta},
  booktitle = {2016 IEEE Cybersecurity Development (SecDev)},
  title     = {Continuous Fuzzing with libFuzzer and AddressSanitizer},
  year      = {2016},
  volume    = {},
  number    = {},
  pages     = {157-157},
  url       = {https://doi.org/10.1109/SecDev.2016.043}
}

@inproceedings{fioraldi2020afl_pp,
  title     = {AFL++ combining incremental steps of fuzzing research},
  author    = {Fioraldi, Andrea and Maier, Dominik and Ei{\ss}feldt, Heiko and Heuse, Marc},
  booktitle = {Proceedings of the 14th USENIX Conference on Offensive Technologies},
  pages     = {10--10},
  year      = {2020}
}

@article{chou2001os,
  author     = {Chou, Andy and Yang, Junfeng and Chelf, Benjamin and Hallem, Seth and Engler, Dawson},
  title      = {An empirical study of operating systems errors},
  year       = {2001},
  issue_date = {Dec. 2001},
  publisher  = {Association for Computing Machinery},
  address    = {New York, NY, USA},
  volume     = {35},
  number     = {5},
  issn       = {0163-5980},
  doi        = {10.1145/502059.502042},
  journal    = {SIGOPS Oper. Syst. Rev.},
  month      = oct,
  pages      = {73–88},
  numpages   = {16}
}

@misc{zhu2025bugsbenchmarks,
  title         = {From Bugs to Benchmarks: A Comprehensive Survey of Software Defect Datasets},
  author        = {Hao-Nan Zhu and Robert M. Furth and Michael Pradel and Cindy Rubio-González},
  year          = {2025},
  eprint        = {2504.17977},
  archiveprefix = {arXiv},
  primaryclass  = {cs.SE},
  url           = {https://arxiv.org/abs/2504.17977}
}

@inproceedings{bohme2017directed,
  author    = {B\"{o}hme, Marcel and Pham, Van-Thuan and Nguyen, Manh-Dung and Roychoudhury, Abhik},
  title     = {Directed Greybox Fuzzing},
  year      = {2017},
  isbn      = {9781450349468},
  publisher = {Association for Computing Machinery},
  address   = {New York, NY, USA},
  doi       = {10.1145/3133956.3134020},
  booktitle = {Proceedings of the 2017 ACM SIGSAC Conference on Computer and Communications Security},
  pages     = {2329–2344},
  numpages  = {16},
  location  = {Dallas, Texas, USA},
  series    = {CCS '17}
}

@book{khorikov2020unit,
  title     = {Unit Testing Principles, Practices, and Patterns},
  author    = {Khorikov, Vladimir},
  year      = {2020},
  publisher = {Simon and Schuster}
}

@inproceedings{jain2025testgeneval,
  title     = {TestGenEval: A Real World Unit Test Generation and Test Completion Benchmark},
  author    = {Kush Jain and Gabriel Synnaeve and Baptiste Roziere},
  booktitle = {The Thirteenth International Conference on Learning Representations},
  year      = {2025},
  url       = {https://openreview.net/forum?id=7o6SG5gVev}
}

@article{schafer2024An,
  author   = {Schäfer, Max and Nadi, Sarah and Eghbali, Aryaz and Tip, Frank},
  journal  = {IEEE Transactions on Software Engineering},
  title    = {An Empirical Evaluation of Using Large Language Models for Automated Unit Test Generation},
  year     = {2024},
  volume   = {50},
  number   = {1},
  pages    = {85-105},
  doi      = {10.1109/TSE.2023.3334955}
}

@article{SANTOS2017Stepwise,
  title    = {Stepwise API usage assistance using n-gram language models},
  author   = {Andr{\'e} L. Santos and Gon{\c c}alo Prendi and Hugo Sousa and Ricardo Ribeiro},
  journal  = {Journal of Systems and Software},
  volume   = {131},
  pages    = {461--474},
  year     = {2017},
  issn     = {0164-1212},
  doi      = {10.1016/j.jss.2016.06.063}
}

@inproceedings{nie2023TeCo,
  author    = {Nie, Pengyu and Banerjee, Rahul and Li, Junyi Jessy and Mooney, Raymond J. and Gligoric, Milos},
  title     = {Learning Deep Semantics for Test Completion},
  year      = {2023},
  isbn      = {9781665457019},
  publisher = {IEEE Press},
  doi       = {10.1109/ICSE48619.2023.00178},
  booktitle = {Proceedings of the 45th International Conference on Software Engineering},
  pages     = {2111–2123},
  numpages  = {13},
  location  = {Melbourne, Victoria, Australia},
  series    = {ICSE '23}
}

@inproceedings{panichella2020revisiting,
  title        = {Revisiting test smells in automatically generated tests: limitations, pitfalls, and opportunities},
  author       = {Panichella, Annibale and Panichella, Sebastiano and Fraser, Gordon and Sawant, Anand Ashok and Hellendoorn, Vincent J},
  booktitle    = {2020 IEEE international conference on software maintenance and evolution (ICSME)},
  pages        = {523--533},
  year         = {2020},
  organization = {IEEE},
  doi          = {10.1109/ICSME46990.2020.00056}
}

@misc{llvm-project,
  title = {The LLVM Compiler Infrastructure},
  url   = {https://github.com/llvm/llvm-project},
  note  = {\url{https://github.com/llvm/llvm-project}, accessed 2025-10-12}
}

@article{Ryan2024code,
  author     = {Ryan, Gabriel and Jain, Siddhartha and Shang, Mingyue and Wang, Shiqi and Ma, Xiaofei and Ramanathan, Murali Krishna and Ray, Baishakhi},
  title      = {Code-Aware Prompting: A Study of Coverage-Guided Test Generation in Regression Setting using LLM},
  year       = {2024},
  issue_date = {July 2024},
  publisher  = {Association for Computing Machinery},
  address    = {New York, NY, USA},
  volume     = {1},
  number     = {FSE},
  doi        = {10.1145/3643769},
  journal    = {Proc. ACM Softw. Eng.},
  month      = jul,
  articleno  = {43},
  numpages   = {21}
}

@inproceedings{wan-etal-2023-better,
  title     = {Better Zero-Shot Reasoning with Self-Adaptive Prompting},
  author    = {Wan, Xingchen  and
               Sun, Ruoxi  and
               Dai, Hanjun  and
               Arik, Sercan  and
               Pfister, Tomas},
  booktitle = {Findings of the Association for Computational Linguistics: ACL 2023},
  month     = jul,
  year      = {2023},
  address   = {Toronto, Canada},
  publisher = {Association for Computational Linguistics},
  doi       = {10.18653/v1/2023.findings-acl.216},
  pages     = {3493--3514}
}

@article{Barr2015oracle,
  author   = {Barr, Earl T. and Harman, Mark and McMinn, Phil and Shahbaz, Muzammil and Yoo, Shin},
  journal  = {IEEE Transactions on Software Engineering},
  title    = {The Oracle Problem in Software Testing: A Survey},
  year     = {2015},
  volume   = {41},
  number   = {5},
  pages    = {507-525},
  doi      = {10.1109/TSE.2014.2372785}
}

@article{liu-etal-2024-lost,
  title     = {Lost in the Middle: How Language Models Use Long Contexts},
  author    = {Liu, Nelson F.  and
               Lin, Kevin  and
               Hewitt, John  and
               Paranjape, Ashwin  and
               Bevilacqua, Michele  and
               Petroni, Fabio  and
               Liang, Percy},
  journal   = {Transactions of the Association for Computational Linguistics},
  volume    = {12},
  year      = {2024},
  address   = {Cambridge, MA},
  publisher = {MIT Press},
  doi       = {10.1162/tacl_a_00638},
  pages     = {157--173}
}

@inproceedings{shamshiri2018how,
  author    = {Shamshiri, Sina and Rojas, José Miguel and Galeotti, Juan Pablo and Walkinshaw, Neil and Fraser, Gordon},
  booktitle = {2018 IEEE 11th International Conference on Software Testing, Verification and Validation (ICST)},
  title     = {How Do Automatically Generated Unit Tests Influence Software Maintenance?},
  year      = {2018},
  volume    = {},
  number    = {},
  pages     = {250-261},
  doi       = {10.1109/ICST.2018.00033}
}

@inproceedings{watson2020learning_assert,
  author    = {Watson, Cody and Tufano, Michele and Moran, Kevin and Bavota, Gabriele and Poshyvanyk, Denys},
  title     = {On learning meaningful assert statements for unit test cases},
  year      = {2020},
  isbn      = {9781450371216},
  publisher = {Association for Computing Machinery},
  address   = {New York, NY, USA},
  doi       = {10.1145/3377811.3380429},
  booktitle = {Proceedings of the ACM/IEEE 42nd International Conference on Software Engineering},
  pages     = {1398–1409},
  numpages  = {12},
  location  = {Seoul, South Korea},
  series    = {ICSE '20}
}

@inproceedings{dinella2022TOGA,
  author    = {Dinella, Elizabeth and Ryan, Gabriel and Mytkowicz, Todd and Lahiri, Shuvendu K.},
  title     = {TOGA: A Neural Method for Test Oracle Generation},
  year      = {2022},
  isbn      = {9781450392211},
  publisher = {Association for Computing Machinery},
  address   = {New York, NY, USA},
  url       = {https://doi.org/10.1145/3510003.3510141},
  booktitle = {Proceedings of the 44th International Conference on Software Engineering},
  pages     = {2130–2141},
  numpages  = {12},
  location  = {Pittsburgh, Pennsylvania},
  series    = {ICSE '22}
}

@inproceedings{wang2025testeval,
  title     = {{TestEval}: Benchmarking Large Language Models for Test Case Generation},
  author    = {Wang, Wenhan  and
               Yang, Chenyuan  and
               Wang, Zhijie  and
               Huang, Yuheng  and
               Chu, Zhaoyang  and
               Song, Da  and
               Zhang, Lingming  and
               Chen, An Ran  and
               Ma, Lei},
  booktitle = {Findings of the Association for Computational Linguistics: NAACL 2025},
  month     = apr,
  year      = {2025},
  address   = {Albuquerque, New Mexico},
  publisher = {Association for Computational Linguistics},
  doi       = {10.18653/v1/2025.findings-naacl.197},
  pages     = {3547--3562},
  isbn      = {979-8-89176-195-7}
}

@inproceedings{mundler2024swtbench,
  author    = {M\"{u}ndler, Niels and M\"{u}ller, Mark Niklas and He, Jingxuan and Vechev, Martin},
  booktitle = {Advances in Neural Information Processing Systems},
  pages     = {81857--81887},
  publisher = {Curran Associates, Inc.},
  title     = {SWT-Bench: Testing and Validating Real-World Bug-Fixes with Code Agents},
  url       = {https://proceedings.neurips.cc/paper_files/paper/2024/file/94f093b41fc2666376fb1f667fe282f3-Paper-Conference.pdf},
  volume    = {37},
  year      = {2024}
}

@inproceedings{Papadakis2018Are,
  author    = {Papadakis, Mike and Shin, Donghwan and Yoo, Shin and Bae, Doo-Hwan},
  booktitle = {2018 IEEE/ACM 40th International Conference on Software Engineering (ICSE)},
  title     = {Are Mutation Scores Correlated with Real Fault Detection? A Large Scale Empirical Study on the Relationship Between Mutants and Real Faults},
  year      = {2018},
  volume    = {},
  number    = {},
  pages     = {537-548},
  doi       = {10.1145/3180155.3180183}
}

@inproceedings{Andrews2005Is,
  author    = {Andrews, J. H. and Briand, L. C. and Labiche, Y.},
  title     = {Is mutation an appropriate tool for testing experiments?},
  year      = {2005},
  isbn      = {1581139632},
  publisher = {Association for Computing Machinery},
  address   = {New York, NY, USA},
  doi       = {10.1145/1062455.1062530},
  booktitle = {Proceedings of the 27th International Conference on Software Engineering},
  pages     = {402–411},
  numpages  = {10},
  location  = {St. Louis, MO, USA},
  series    = {ICSE '05}
}

@inproceedings{si2020code2inv,
  author    = {Si, Xujie and Naik, Aaditya and Dai, Hanjun and Naik, Mayur and Song, Le},
  title     = {Code2Inv: A Deep Learning Framework for Program Verification},
  year      = {2020},
  isbn      = {978-3-030-53290-1},
  publisher = {Springer-Verlag},
  address   = {Berlin, Heidelberg},
  doi       = {10.1007/978-3-030-53291-8_9},
  booktitle = {Computer Aided Verification: 32nd International Conference, CAV 2020, Los Angeles, CA, USA, July 21–24, 2020, Proceedings, Part II},
  pages     = {151–164},
  numpages  = {14},
  location  = {Los Angeles, CA, USA}
}

@inproceedings{Xie2022DocTer,
  author    = {Xie, Danning and Li, Yitong and Kim, Mijung and Pham, Hung Viet and Tan, Lin and Zhang, Xiangyu and Godfrey, Michael W.},
  title     = {DocTer: documentation-guided fuzzing for testing deep learning API functions},
  year      = {2022},
  isbn      = {9781450393799},
  publisher = {Association for Computing Machinery},
  address   = {New York, NY, USA},
  doi       = {10.1145/3533767.3534220},
  booktitle = {Proceedings of the 31st ACM SIGSOFT International Symposium on Software Testing and Analysis},
  pages     = {176–188},
  numpages  = {13},
  location  = {Virtual, South Korea},
  series    = {ISSTA 2022}
}

@misc{lisa_zenodo,
  author       = {Yang, Ruogu and He, Yifeng and Xu, Yundi and Wei, Yuqing and Chen, Hao},
  title        = {Lisa Artifacts},
  month        = {1},
  year         = {2026},
  doi          = {10.5281/zenodo.20839021},
  howpublished = {Zenodo, \url{https://doi.org/10.5281/zenodo.20839021}},
  publisher    = {Zenodo}
}

@inproceedings{claessen2000quick_check,
  title     = {{QuickCheck}: A Lightweight Tool for Random Testing of {Haskell} Programs},
  author    = {Claessen, Koen and Hughes, John},
  booktitle = {5th ACM SIGPLAN International Conference on Functional Programming (ICFP)},
  pages     = {268--279},
  year      = {2000},
  publisher = {ACM}
}

@article{lampropoulos2019coverage,
  title     = {Coverage Guided, Property Based Testing},
  author    = {Lampropoulos, Leonidas and Hicks, Michael and Pierce, Benjamin C.},
  journal   = {Proceedings of the ACM on Programming Languages (OOPSLA)},
  volume    = {3},
  pages     = {1--29},
  year      = {2019},
  publisher = {ACM}
}

@incollection{dijkstra1972notes,
  title     = {Notes on Structured Programming},
  author    = {Dijkstra, Edsger W.},
  booktitle = {Structured Programming},
  pages     = {1--82},
  year      = {1972},
  publisher = {Academic Press}
}

@inproceedings{lattner2004llvm,
  title     = {{LLVM}: A Compilation Framework for Lifelong Program Analysis \& Transformation},
  author    = {Lattner, Chris and Adve, Vikram},
  booktitle = {International Symposium on Code Generation and Optimization (CGO)},
  pages     = {75--86},
  year      = {2004},
  publisher = {IEEE}
}

@article{bessey2010few,
  title     = {A Few Billion Lines of Code Later: Using Static Analysis to Find Bugs in the Real World},
  author    = {Bessey, Al and Block, Ken and Chelf, Ben and Chou, Andy and Fulton, Bryan and Hallem, Seth and Henri-Gros, Charles and Kamsky, Asya and McPeak, Scott and Engler, Dawson},
  journal   = {Communications of the ACM},
  volume    = {53},
  number    = {2},
  pages     = {66--75},
  year      = {2010},
  publisher = {ACM}
}

@inproceedings{cadar2008klee,
  title     = {{KLEE}: Unassisted and Automatic Generation of High-Coverage Tests for Complex Systems Programs},
  author    = {Cadar, Cristian and Dunbar, Daniel and Engler, Dawson R.},
  booktitle = {8th USENIX Symposium on Operating Systems Design and Implementation (OSDI)},
  pages     = {209--224},
  year      = {2008}
}

@inproceedings{leino2010dafny,
  title     = {Dafny: An Automatic Program Verifier for Functional Correctness},
  author    = {Leino, K. Rustan M.},
  booktitle = {16th International Conference on Logic for Programming, Artificial Intelligence, and Reasoning (LPAR)},
  pages     = {348--370},
  year      = {2010},
  publisher = {Springer}
}

@misc{wang2024repogenreflex,
  title         = {RepoGenReflex: Enhancing Repository-Level Code Completion with Verbal Reinforcement and Retrieval-Augmented Generation},
  author        = {Jicheng Wang and Yifeng He and Hao Chen},
  year          = {2024},
  eprint        = {2409.13122},
  archiveprefix = {arXiv},
  primaryclass  = {cs.SE},
  url           = {https://arxiv.org/abs/2409.13122}
}

@inproceedings{shinn2023reflexion,
  title     = {Reflexion: language agents with verbal reinforcement learning},
  author    = {Noah Shinn and Federico Cassano and Ashwin Gopinath and Karthik R Narasimhan and Shunyu Yao},
  booktitle = {Thirty-seventh Conference on Neural Information Processing Systems},
  year      = {2023},
  url       = {https://openreview.net/forum?id=vAElhFcKW6}
}

@inproceedings{zhang2026llamafuzz,
  title     = {LLAMAFUZZ: Large Language Model Enhanced Greybox Fuzzing},
  author    = {Hongxiang Zhang and Yuyang Rong and Yifeng He and Hao Chen},
  booktitle = {ACM/IEEE International Conference on Automation of Software Test (AST)},
  year      = {2026},
  url       = {https://arxiv.org/abs/2406.07714}
}

@article{xia2024agentless,
  author     = {Xia, Chunqiu Steven and Deng, Yinlin and Dunn, Soren and Zhang, Lingming},
  title      = {Demystifying LLM-Based Software Engineering Agents},
  year       = {2025},
  issue_date = {July 2025},
  publisher  = {Association for Computing Machinery},
  address    = {New York, NY, USA},
  volume     = {2},
  number     = {FSE},
  doi        = {10.1145/3715754},
  journal    = {Proc. ACM Softw. Eng.},
  month      = jun,
  articleno  = {FSE037},
  numpages   = {24}
}

@article{weyuker1982nontestable,
  author  = {Weyuker, Elaine J.},
  title   = {On Testing Non-testable Programs},
  journal = {The Computer Journal},
  volume  = {25},
  number  = {4},
  pages   = {465--470},
  year    = {1982},
  doi     = {10.1093/comjnl/25.4.465}
}

@inproceedings{howden1978theoretical,
  author    = {Howden, William E.},
  title     = {Theoretical and Empirical Studies of Program Testing},
  booktitle = {Proceedings of the 3rd International Conference on Software Engineering (ICSE)},
  series    = {ICSE '78},
  pages     = {305--311},
  year      = {1978},
  publisher = {IEEE Press},
  location  = {Atlanta, Georgia, USA}
}

\clearpage 
\newpage
\appendices
\section{Prompt Templates} \label{sec:prompt-templates}

\begin{figure}[htbp]
	\input{figures/prompt1.tex}
	\caption{The structured prompt template employed in our framework. Dynamic fields are denoted by brackets.}
	\label{fig:prompt_template}
\end{figure}

\begin{figure}[htbp]
	\input{figures/repair.tex}
	\caption{The structured repair prompt template for correcting erroneous API sequences. Dynamic fields are denoted by brackets.}
	\label{fig:repair}
\end{figure}

\section{Details of API N-Gram Feedback}

\subsection{Proofs of Condense Transformation} \label{sec:proofs-of-condense-transformation}
\begin{proof}[Proof of \autoref{prop:range-compression}]
	Let $x, y \in I$ such that $0 < x < y \leq 1$.
	Let $\lambda = x / y \in (0, 1)$,
	then $x = \lambda y + (1 - \lambda) \cdot 0$.
	By strict concavity,
	\[
		\lambda f(y) = \lambda f(y) + (1 - \lambda) f(0) < f(\lambda y + (1 - \lambda) \cdot 0) = f(x).
	\]
	Rearranging the above inequality gives
	\[
		\lambda f(y) < f(x) \implies f(y) < \frac{f(x)}{\lambda} = \frac{y}{x} f(x) \implies \frac{f(y)}{f(x)} < \frac{y}{x}.
	\]
	Since $f$ is strictly monotone increasing, we have $0 < f(x) < f(y)$,
	thus
	\[
		1 < \frac{f(y)}{f(x)} < \frac{y}{x}.
	\]
	We conclude that any strictly monotone increasing concave function with left-endpoint preserved compresses the range of input values.
\end{proof}

\begin{proof}[Proof that power condense is a condense transformation]
	We prove that power condense is a condense transformation.
	We write the power condense $C(a) = \hat{E}(a)^{\alpha}$ as $f(x) = x^{\alpha}$ for $\alpha \in (0, 1)$ for simplification.
	\begin{itemize}
		\item \textit{Strictly monotone increasing:}
		      We take the first derivative of $f$ with respect to $x$,
		      \[f'(x) = \odv{x^{\alpha}}{x} = \alpha x^{\alpha - 1}.\]
		      Since $\alpha > 0$, $f'(x) > 0$ for all $x > 0$.
		      Thus, $f$ is strictly monotone increasing.
		\item \textit{Strictly concave:}
		      For concavity, we take the second derivative of $f$ with respect to $x$,
		      \[f''(x) = \odv[2]{x^{\alpha}}{x} = \odv{\alpha x^{\alpha - 1}}{x} = \alpha (\alpha - 1) x^{\alpha - 2}.\]
		      Since $\alpha \in (0,1)$, we have $\alpha - 1 < 0$,
		      thus $f''(x) < 0$ for all $x > 0$.
		      Therefore, we conclude that $f$ is strictly concave.
		\item \textit{Left-endpoint preserved:} We have $f(0) = 0^{\alpha} = 0$.
	\end{itemize}
	Thus, we conclude that power condense satisfies \autoref{def:condense}.
\end{proof}

\subsection{Choice of Maximum Condensation Factor} \label{sec:choice-of-alpha-min}
In this section, we illustrate how we choose the maximum condensation factor $\alpha_{\min}$ in \autoref{eqn:adaptive-condensation}.
Let $x, y \in I$ be the energies of two API functions such that $x < y$,
then their relative ratio after power condense is given by
\[
	R = \frac{y^{\alpha}}{x^{\alpha}} = \left( \frac{y}{x} \right)^{\alpha}.
\]
Let us consider the extreme case after normalization, where $x = \varepsilon$ and $y = 1$, so $\nicefrac{y}{x} = \nicefrac{1}{\varepsilon} = 100$. Since $R = (\nicefrac{y}{x})^{\alpha}$ is increasing in $\alpha$, the \emph{strongest} condensation ($\alpha = \alpha_{\min}$) yields the \emph{smallest} selection ratio between the highest- and lowest-energy API function:
\[
	R_{\alpha_{\min}} = \left( \frac{1}{\varepsilon} \right)^{\alpha_{\min}}.
\]
We choose $\alpha_{\min}$ so that, even under this strongest condensation, the highest-energy API function stays a useful factor more likely to be selected than the lowest-energy one, preventing over-condensation. For example, to keep that factor at $10\times$,
\begin{align*}
	R_{\alpha_{\min}} = \left( \frac{1}{\varepsilon} \right)^{\alpha_{\min}}                & = 10,                                            \\
	\alpha_{\min} \log \left( \frac{1}{\varepsilon} \right) = \log(10) \implies \alpha_{\min} & = \frac{\log(10)}{\log(1/\varepsilon)} = 0.5.
\end{align*}

Through this feedback loop, the API energy distribution dynamically evolves based on the success and novelty of generated $3$-grams, enabling the model to balance correctness reinforcement with exploratory diversity.

\section{Design and Measurement Notes}
\label{sec:design-measurement-notes}

\subsection{Comparison with \textsc{Hopper}}
\label{sec:hopper-comparison}

\lisa builds on Hopper's~\cite{chen2023hopper} distinction between Intra-API (argument validity) and Inter-API (call dependencies). \lisa's advantage lies in its ``decoupling'' strategy, which delegates resolution of Inter-API constraints entirely to \lisaapi. \lisaapi focuses on generating valid and high-quality API sequences (solving the ``reachability'' problem) through LLM reasoning and API $n$-gram feedback before invariant generation. This process drastically reduces the search space complexity, allowing the subsequent stage to concentrate entirely on verifying Intra-API behaviors through invariants, thereby ensuring high-quality assertions.

\subsection{Straight-Line Execution Policy}
\label{sec:straight-line-rationale}

We acknowledge that straight-line programs cannot explicitly encode branches. However, the library under test can still take different internal branches based on internal state, call order, and implicit conditions, even when the generated driver itself is straight-line. This is an intentional trade-off to prioritize correctness and executability while still enabling substantial state evolution and control-flow coverage.

\subsection{Measurement Setup}
\label{sec:measurement-setup}

For \ossfuzz, we used the official fuzzing harnesses from the \ossfuzz project for each library and did not modify them. All coverage metrics reported in the main text are measured at the library level. That is, the coverage only reflects the execution of the target library code, not the fuzzing harness or the test driver. In our implementation, we used \gptmini for the API sequence generation phase (\lisaapi), since this phase focuses on fast, large-scale exploration of API sequences. We used \gptfull for the invariant generation phase (\lisainv), since this phase requires stronger reasoning to produce correct and meaningful invariants.

\section{Metric Definitions and Annotation Protocol}
\label{sec:metric-details}
This appendix supplements the metric definitions in \autoref{sec:metric} with the annotation protocols used for AUVC and HBDR; the core \lisaapi metrics (CSR, ESR, LC, BC) are defined there.

\subsection{AUVC Uniqueness Annotation Protocol} \label{sec:auvc-protocol}
Multiple invariant statements may be considered unique even if they query the same program state. For example, repeated queries over the same zlib stream state following a sequence of API operations are considered distinct. Conversely, syntactically different invariants may be non-unique if they express equivalent conditions (e.g., \texttt{x >= 1} and \texttt{x > 0} for $x \in \mathbb{Z}$). The uniqueness of verifications is determined through manual analysis by two authors with more than three years of C/C++ development experience, who independently inspected and categorized the generated invariants based on their semantic meaning. Their results are discussed and confirmed by a group of senior software researchers and library maintainers. Disagreements were resolved through discussion to reach a consensus.

\subsection{HBDR vs.\ Mutation Score} \label{sec:ibdr-rationale}
While mutation score is a widely adopted metric in evaluating unit test generation techniques~\cite{Andrews2005Is}, synthetic mutants may not fully represent the complexity of real-world software faults~\cite{Papadakis2018Are}. We therefore prefer historical bug re-introduction as a more faithful proxy for real-world fault detection.

\section{API Contract Documentation Details}
\label{sec:api-contract-doc-details}

We construct API contract documentation through two complementary ways: dynamic analysis and natural-language documentation analysis.

\paragraph{Invariants generated by Daikon}
Daikon~\cite{ernst2000daikon} is a dynamic analysis tool designed to detect \emph{likely invariants} in programs by observing their runtime behavior. It reports properties that hold during the observed executions. However, the properties extracted from \emph{observed} executions cannot be directly used to detect functional bugs, since they only capture the snapshot of the current program development stage. Even more critically, if researchers mistakenly treat incorrect or spurious properties as true invariants, correct code with correct logic may be falsely flagged as buggy, leading to misleading or invalid conclusions. Nevertheless, if these Daikon-generated invariants are further manually reviewed and validated, they can still serve as valuable references for understanding program behavior and guiding subsequent analysis, such as refining test oracles or identifying potential correctness properties.

\paragraph{Inferring Invariants From Documentation}
In open-source libraries, developers provide official documentation that describes each API's purpose, parameters, return values, and expected behavior. These expert-crafted manuals include statements such as ``this method is read-only'' or ``this API does not modify internal state'', thereby implicitly defining the behavior invariants of the API. Because the documentation already tells what the API does and what it guarantees, we leverage these descriptions to manually infer invariants (i.e., properties that should always hold when the API is used correctly). For example, if the documentation states that an API is read-only, then a concrete invariant in natural language might be ``no internal state is modified''. Prior work~\cite{Xie2022DocTer} provides evidence that leveraging documentation can effectively support program analysis.

Daikon captures empirical invariants based on runtime behavior, while API documentation provides intended invariants that reflect the developer's design. In practice, we first run Daikon on automatically generated executable API sequences to obtain candidate properties. We use a filtering step to keep invariant candidates that match the official documentation and do not break the program when added as assertions. In our current method, checking if candidates match the documentation requires only a small, one-time human check per API. This step needs no training data or model tuning, and the effort does not grow with the number of tests. This combination improves both the accuracy and coverage of the inferred invariants.

\end{document}